

\documentclass[authoryear,10pt]{elsarticle}



\usepackage{amssymb}
\usepackage{amsmath}
\usepackage{amsthm}
\newtheorem{thm}{Theorem}

\newtheorem{rem}{Remark}





\begin{document}

\begin{frontmatter}



\title{Using mixtures in seemingly unrelated linear regression models with non-normal errors}


 \author{Giuliano Galimberti\corref{cor1}}
 \ead{giuliano.galimberti@unibo.it}
 \author{Elena Scardovi}
 \author{Gabriele Soffritti}

\cortext[cor1]{Correspondence to: Department of Statistical Sciences, University of Bologna \\ via Belle Arti 41, 40126 Bologna, Italy. Tel.: +39 051 2098227, Fax: +39 051 232153}


\address{Department of Statistical Sciences, University of Bologna}

\begin{abstract}
Seemingly unrelated linear regression models are introduced in which the distribution of the errors is a finite mixture of Gaussian components. Identifiability conditions are provided. The score vector and the Hessian matrix are derived. Parameter estimation is performed using the maximum likelihood method and an Expectation-Maximisation algorithm is developed. The usefulness of the proposed methods and a numerical evaluation of their properties are illustrated through the analysis of a real dataset.
\end{abstract}

\begin{keyword}
EM algorithm \sep Gaussian mixture model \sep Hessian matrix \sep Score vector.


\end{keyword}

\end{frontmatter}


\section{Introduction}\label{intro}
``Seemingly unrelated regression equations'' is an expression first used by \citet{zellner1962}. It indicates a set of equations for modelling the dependence of $D$ variables ($D \geq 1$) on one or more regressors in which the error terms in the different equations are allowed to be correlated and, thus, the equations should be jointly considered.
The range of situations for which models composed of seemingly unrelated regression equations are appropriate is wide, including cross-section data, time-series data and repeated measures \citep[see, e.g.,][]{srivastava1987,park1993}.

Seemingly unrelated regression models have been studied through many approaches.
In \citet{zellner1962,zellner1963} feasible generalized least squares estimators are introduced and their properties are analysed. The maximum likelihood estimator from a Gaussian distribution for the error terms is investigated, for example, in \citet{kmenta1968,oberhofer1974,magnus1978,park1993}. Further developments have been obtained by using bootstrap methods \citep[see, e.g.,][]{rocke1989,rilstone1996} and a likelihood distributional analysis \citep{fraser2005}. Many studies have been performed also in a Bayesian framework \citep[see, e.g.,][]{zellner1971,percy1992,ando2010,zellner2010a}. Most of these methods have been developed under the assumption that the distribution of the error terms is Gaussian. Properties of the feasible generalized least squares estimators under non-Gaussian errors or solutions obtained using other distributions are described, for example, in \citet{srivastava1995,kurata1999,kowalski1999,ng2002,zellner2010b}.

The aim of this paper is to propose the use of finite mixtures for modelling the error term distribution in a seemingly unrelated linear regression model. Finite mixture models are widely employed in many areas of multivariate analysis, especially for model-based cluster analysis, discriminant analysis and multivariate density estimation \citep[see, e.g.,][]{mclachlan2000}. Recently, finite mixtures of Gaussian and Student-$t$ distributions have been employed also in multiple and multivariate linear regression analysis \citep[see, e.g.,][]{bartolucci2005,soffritti2011,galimberti2013} to handle non-normal error terms. This approach has the advantage of capturing the effect of omitted nominal regressors from the model and obtaining robust estimates of the regression coefficients when the distribution of the error terms is non-normal. In this paper the same approach is applied to the seemingly unrelated regression model. In particular, the focus is on seemingly unrelated linear regression models in which the error terms are assumed to follow a finite mixture of multivariate Gaussian distributions.

The paper is organized as follows. Section~\ref{sec:methods} illustrates the theory behind the new methodology. Namely, the novel models are presented in Section~\ref{sec:model}. Theorem~\ref{th:identificabilita} provides conditions for the model identifiability (Section~\ref{sec:identificabilita}). The score vector and the Hessian matrix for the model parameter are reported in Section~\ref{sec:score} (Theorems~\ref{theo:score} and \ref{theo:hessian}). Details about the maximum likelihood (ML) estimation through an Expectation-Maximisation (EM) algorithm are given in Section~\ref{sec:em}.
Results obtained from the analysis of a real dataset using the proposed approach and other
methods are presented in Section~\ref{sec:experimental}. In Section~\ref{sec:conclusion} some concluding remarks are provided. Proofs of Theorems~\ref{theo:score} and \ref{theo:hessian} and other technical results are in Appendix.

\section{Seemingly unrelated regression models with a mixture of Gaussian components for the error terms}\label{sec:methods}
\subsection{The general model}\label{sec:model}
The novel model can be introduced as follows. Let $\mathbf{Y}_i=(Y_{i1}, \ldots, Y_{id}, \ldots, Y_{iD})'$ be the vector of the $D$ dependent variables for the $i$th observation, $i=1, \ldots, I$. Furthermore, let $\mathbf{x}_{id}$ be the vector composed of the fixed values of the $P_d$ regressors for the $i$th observation in the equation for the $d$th dependent variable, $d=1, \ldots, D$. A seemingly unrelated regression model can be defined through the following system of equations:
\begin{equation}\label{eq:modelloyivecchiosistema}
\left\{
\begin{array}{l}
  Y_{i1}  = \beta_{01}  + \mathbf{x}_{i1}'\boldsymbol{\beta}_1  + \epsilon_{i1} \\
  \vdots   \\
  Y_{id}  = \beta_{0d}  + \mathbf{x}_{id}'\boldsymbol{\beta}_d  + \epsilon_{id} \\
  \vdots \\
  Y_{iD}  = \beta_{0D}  + \mathbf{x}_{iD}'\boldsymbol{\beta}_D + \epsilon_{iD}
\end{array}
\right. \ \ i = 1, \ldots, I,
\end{equation}
where $\beta_{0d}$, $\boldsymbol{\beta}_d$, and $\epsilon_{id}$ are the intercept, the regression coefficient vector and the error term for the $i$th observation in the equation for the $d$th dependent variable, respectively.
Equation (\ref{eq:modelloyivecchiosistema}) can be written in compact form using the following matrix notation:
\begin{equation}\label{eq:modelloyivecchio}
\mathbf{Y}_i= \boldsymbol{\beta}_0 + \mathbf{X}_{i}'\boldsymbol{\beta}+\boldsymbol{\epsilon}_{i}, \ \ i = 1, \ldots, I,
\end{equation}
where $\boldsymbol{\beta}_0=\left(\beta_{01}, \ldots, \beta_{0d}, \ldots, \beta_{0D}\right)'$,
 $\boldsymbol{\beta}=\left(\boldsymbol{\beta}'_1, \ldots, \boldsymbol{\beta}'_d, \ldots, \boldsymbol{\beta}'_D\right)'$, $\boldsymbol{\epsilon}_{i}=(\epsilon_{i1}, \ldots, \epsilon_{id}, \ldots, \epsilon_{iD})'$, and $\mathbf{X}_{i}$ is the following $P \times D$ partitioned matrix:\\
\begin{equation}\label{eq:matriceXi}
\left[
  \begin{array}{cccc}
    \mathbf{x}_{i1} & \mathbf{0}_{P_1} & \cdots & \mathbf{0}_{P_1} \\
    \mathbf{0}_{P_2} & \mathbf{x}_{i2} & \cdots & \mathbf{0}_{P_2} \\
    \vdots & \vdots &  & \vdots \\
    \mathbf{0}_{P_D} & \mathbf{0}_{P_D} & \cdots & \mathbf{x}_{iD} \\
  \end{array}
\right],
\end{equation}
with $\mathbf{0}_{P_d}$ denoting the $P_d$-dimensional null vector and $P=\sum_{d=1}^D P_d$.
\begin{rem}
Note that this definition of seemingly unrelated regression model differs from the one originally introduced by \citet{zellner1962}; however, these two definitions are equivalent (see, for example, \citet{park1993}). The choice of the model definition given in equation (\ref{eq:modelloyivecchio}) is motivated by its analytical convenience in deriving some technical results described in this paper.
\end{rem}

The proposed model is based on the assumption that the $I$ error terms are independent and identically distributed, and that
\begin{equation} \label{eq:mistura}
\boldsymbol{\epsilon}_i \sim \sum_{k=1}^K \pi_k N_D(\boldsymbol{\nu}_{k},\boldsymbol{\Sigma}_{k}), \ \ i = 1, \ldots, I,
\end{equation}
where $\pi_k$'s are positive weights that sum to 1, the $\boldsymbol{\nu}_{k}$'s are $D$-dimensional mean vectors that satisfy the constraint $\sum_{k=1}^K \pi_k \boldsymbol{\nu}_{k}= \boldsymbol{0}_D$, the $\boldsymbol{\Sigma}_{k}$'s are $D \times D$ positive definite symmetric matrices and $N_D(\boldsymbol{\nu}_{k},\boldsymbol{\Sigma}_{k})$ denotes the $D$-dimensional Gaussian distribution with parameters $\boldsymbol{\nu}_{k}$ and $\boldsymbol{\Sigma}_{k}$.

Given equations (\ref{eq:modelloyivecchio}) and ($\ref{eq:mistura}$), the probability density function (p.d.f.) of the $D$-dimensional random vector $\mathbf{Y}_i$ is
\begin{equation} \label{eq:mixreg}
\sum_{k=1}^K \pi_k \phi_D(\mathbf{y}_i; \boldsymbol{\lambda}_{k}+ \mathbf{X}_{i}'\boldsymbol{\beta}, \boldsymbol{\Sigma}_{k}), \ \ \mathbf{y}_i \in \mathbb{R}^D, \ \ i=1, \ldots, I,
\end{equation}
where $\phi_D(\mathbf{y}_i;\boldsymbol{\mu},\boldsymbol{\Sigma})$ is the p.d.f. of the $D$-dimensional Gaussian distribution $N_D(\boldsymbol{\mu},\boldsymbol{\Sigma})$ evaluated at $\mathbf{y}_i$, and $\boldsymbol{\lambda}_{k}=\boldsymbol{\beta}_0+\boldsymbol{\nu}_{k}$. Differently from the $\boldsymbol{\nu}_{k}$'s, the $\boldsymbol{\lambda}_{k}$'s are not subject to any constraint. For this reason, in this paper the attention is focused on the vector of the model parameters given by
$\boldsymbol{\theta}=(\boldsymbol{\pi}', \boldsymbol{\beta}',
\boldsymbol{\theta}'_1, \ldots, \boldsymbol{\theta}'_K)'$, where $\boldsymbol{\pi}=(\pi_1, \ldots, \pi_{K-1})'$,
$\boldsymbol{\theta}_k =\left(\boldsymbol{\lambda}'_k, \mathrm{v}\left(\boldsymbol{\Sigma}_k\right)'\right)'$ for $k=1, \ldots, K$, with $\mathrm{v}(\boldsymbol{\Sigma}_{k})$ denoting the $\frac{1}{2}D(D+1)$-dimensional vector formed by stacking the columns of the lower triangular portion of $\boldsymbol{\Sigma}_{k}$ \citep[see, e.g.,][]{schott2005}.

Suppose that the $i$th observation was drawn from the $k$th component of the mixture. Then, the equation for such an observation would be
\begin{equation}\label{eq:modelloyinuovo}
\mathbf{Y}_i=\boldsymbol{\lambda}_k+\mathbf{X}_{i}'\boldsymbol{\beta}+\tilde{\boldsymbol{\epsilon}}_{ik},
\end{equation}
where $\tilde{\boldsymbol{\epsilon}}_{ik} \sim N_D(\boldsymbol{0}_D,\boldsymbol{\Sigma}_k)$.
The model defined by equation (\ref{eq:mixreg}) can be seen as a mixture of $K$ seemingly unrelated linear regression models with Gaussian error terms. In this model, observations drawn from different components have different intercepts for the $D$ dependent variables and different covariance matrices for the error terms, while the regression coefficients are equal across components. In the special case where $K=1$, this model results in the classical seemingly unrelated regression model with Gaussian errors. If $\mathbf{x}_{id}=\mathbf{x}_{i}$ $\forall d$ (the vectors of the regressors for the $D$ equations coincide), the model proposed by \citet{soffritti2011} is obtained. Furthermore, the model proposed by \citet{bartolucci2005} can be obtained when $D=1$. Finally, if $P_d=0$ $\forall d$, model (\ref{eq:mixreg}) results in the mixture model with $K$ Gaussian components \citep[see, e.g.,][]{mclachlan2000}.

\subsection{Model identifiability}\label{sec:identificabilita}

As any finite mixture model, also model (\ref{eq:mixreg}) is invariant under permutations of the labels of the $K$ components \citep[see, e.g.,][]{mclachlan2000}. For the proposed model, whose parameter is
$\boldsymbol{\theta}=(\boldsymbol{\pi}', \boldsymbol{\beta}',
\boldsymbol{\theta}'_1, \ldots, \boldsymbol{\theta}'_K)'$, the following theorem holds:

\begin{thm}\label{th:identificabilita}
The linear regression model (\ref{eq:mixreg}) is identifiable, provided that, for $d=1, \ldots, D$, vectors
$\{\mathbf{x}_{id}, i=1, \ldots, I\}$ do not lie on a common $(P_d-1)$-dimensional hyperplane.
\end{thm}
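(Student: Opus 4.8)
The plan is to reduce identifiability to the classical identifiability of finite mixtures of multivariate Gaussian distributions, and then to use the design condition to separate the common regression coefficients $\boldsymbol{\beta}$ from the component-specific intercepts $\boldsymbol{\lambda}_k$. As usual, ``identifiable'' is understood up to a relabelling of the $K$ components; since the $\mathbf{Y}_i$'s are independent, the sample density factorises over $i$, so it suffices to show that if two parameter vectors $\boldsymbol{\theta}$ and $\boldsymbol{\theta}^*$ (with $K$ and $K^*$ components) give the same density (\ref{eq:mixreg}) as a function of $\mathbf{y}_i$ for every $i=1,\dots,I$, then $K=K^*$ and $\boldsymbol{\theta}=\boldsymbol{\theta}^*$ after a permutation of labels.

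For each fixed $i$, the density (\ref{eq:mixreg}) is a finite mixture of $D$-variate Gaussian densities with weights $\pi_k$ and component parameters $(\boldsymbol{\lambda}_k+\mathbf{X}_i'\boldsymbol{\beta},\boldsymbol{\Sigma}_k)$, $k=1,\dots,K$. Because the class of such mixtures is identifiable (see, e.g., \citet{mclachlan2000}), the assumed equality at that $i$ yields $K=K^*$ together with a permutation $\sigma_i$ of $\{1,\dots,K\}$ for which $\pi_k=\pi^*_{\sigma_i(k)}$, $\boldsymbol{\Sigma}_k=\boldsymbol{\Sigma}^*_{\sigma_i(k)}$ and $\boldsymbol{\lambda}_k+\mathbf{X}_i'\boldsymbol{\beta}=\boldsymbol{\lambda}^*_{\sigma_i(k)}+\mathbf{X}_i'\boldsymbol{\beta}^*$ for all $k$. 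I would then pass to a single permutation valid for all $i$: since $\pi_k$ and $\boldsymbol{\Sigma}_k$ do not depend on $i$, each $\sigma_i$ must map $k$ to an index sharing the pair $(\pi_k,\boldsymbol{\Sigma}_k)$, so after relabelling the starred components once (say by $\sigma_1$) the only residual freedom permutes components with identical weight and identical covariance matrix; such components are forced to have identical intercepts by the next step, so this residual freedom is exactly the harmless label switching.

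It remains to exploit the hypothesis. With a common permutation fixed (take it to be the identity), $\boldsymbol{\lambda}_k+\mathbf{X}_i'\boldsymbol{\beta}=\boldsymbol{\lambda}^*_k+\mathbf{X}_i'\boldsymbol{\beta}^*$ for all $i$ and $k$. Using the block structure (\ref{eq:matriceXi}) of $\mathbf{X}_i$, coordinate $d$ of this identity reads $\mathbf{x}_{id}'(\boldsymbol{\beta}_d-\boldsymbol{\beta}^*_d)=\lambda^*_{kd}-\lambda_{kd}$ for every $i$, so the left-hand side is constant in $i$. If $\boldsymbol{\beta}_d\neq\boldsymbol{\beta}^*_d$, then all the points $\{\mathbf{x}_{id}, i=1,\dots,I\}$ would satisfy the linear equation $(\boldsymbol{\beta}_d-\boldsymbol{\beta}^*_d)'\mathbf{x}=\lambda^*_{kd}-\lambda_{kd}$ with a non-zero coefficient vector, i.e. they would lie on a common $(P_d-1)$-dimensional hyperplane of $\mathbb{R}^{P_d}$, contradicting the assumption. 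Hence $\boldsymbol{\beta}_d=\boldsymbol{\beta}^*_d$, and then $\lambda_{kd}=\lambda^*_{kd}$ as well. Ranging over $d=1,\dots,D$ and $k=1,\dots,K$ gives $\boldsymbol{\beta}=\boldsymbol{\beta}^*$ and $\boldsymbol{\lambda}_k=\boldsymbol{\lambda}^*_k$, which combined with $\pi_k=\pi^*_k$ and $\boldsymbol{\Sigma}_k=\boldsymbol{\Sigma}^*_k$ completes the proof. I expect the only delicate point to be the bookkeeping that makes the component labelling consistent across the $I$ observations in the degenerate case where several components share both weight and covariance matrix; the rest is a direct appeal to Gaussian-mixture identifiability followed by the elementary hyperplane argument.
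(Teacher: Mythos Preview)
Your argument is essentially correct but follows a different route from the paper. The paper does not work observation by observation; instead it stacks the whole sample into a single $(D\cdot I)$-dimensional vector $\mathbf{y}$, rewrites the joint density (\ref{eq:jointpdf}) as a finite mixture of $K^I$ Gaussian components in $\mathbb{R}^{D\cdot I}$ (equation (\ref{eq:jointpdf2})), and then invokes the identifiability result for that enlarged mixture proved in \citet{soffritti2011}. In that approach the permutation bookkeeping across $i$ disappears into the cited lemma, and the hyperplane condition is only briefly mentioned as what pins down $\boldsymbol{\beta}$ and the $\boldsymbol{\lambda}_k$'s.

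Your approach, by contrast, stays in $\mathbb{R}^D$, uses only the classical identifiability of $D$-variate Gaussian mixtures at each $i$, and then makes the role of the design condition fully explicit via the hyperplane argument. This is more elementary and self-contained, at the price of the label-consistency step you flag as ``delicate''. That step does need a bit more than you wrote: saying the residual permutation acts only within blocks of equal $(\pi_k,\boldsymbol{\Sigma}_k)$ is fine, but the claim that ``such components are forced to have identical intercepts by the next step'' is not quite the right justification. One clean way to close it is to sum the mean identities over $k$ in such a block (or over all $k$), obtaining $\mathbf{X}_i'(\boldsymbol{\beta}-\boldsymbol{\beta}^*)$ equal to a constant independent of $i$; the hyperplane hypothesis then gives $\boldsymbol{\beta}=\boldsymbol{\beta}^*$ first, after which the $i=1$ equations (where the permutation is the identity) yield $\boldsymbol{\lambda}_k=\boldsymbol{\lambda}^*_k$ directly, and the remaining $\tau_i$'s become genuine label switches. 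With that small addition your proof is complete.
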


\begin{proof}
The identifiability condition described in Theorem~\ref{th:identificabilita} is a generalization of the usual condition for the identifiability of a multiple linear regression model. It is required in order to guarantee identifiability of the parameters $\boldsymbol{\beta}$ and $\boldsymbol{\lambda}_1, \ldots, \boldsymbol{\lambda}_K$ that characterize the conditional expectations for the $D$ dependent variables.

Furthermore, consider the joint conditional p.d.f. of a random sample $\mathbf{y}_1, \ldots, \mathbf{y}_i, \ldots, \mathbf{y}_I$ from the model (\ref{eq:mixreg}), given the fixed values of the regressors contained in $\mathbf{X}_{1}, \ldots, \mathbf{X}_{I}$:

\begin{equation} \label{eq:jointpdf}
f(\mathbf{y}_1, \ldots, \mathbf{y}_I; \mathbf{X}_{1}, \ldots, \mathbf{X}_{I}, \boldsymbol{\theta})=\prod_{i=1}^I \left[\sum_{k=1}^K \pi_k \phi_D(\mathbf{y}_i; \boldsymbol{\lambda}_{k}+ \mathbf{X}_{i}'\boldsymbol{\beta}, \boldsymbol{\Sigma}_{k})\right].
\end{equation}

It is possible to show that (\ref{eq:jointpdf}) can be written as the following mixture of $J$ Gaussian components:
\begin{equation} \label{eq:jointpdf2}
f(\mathbf{y}_1, \ldots, \mathbf{y}_I; \mathbf{X}_{1}, \ldots, \mathbf{X}_{I}, \boldsymbol{\theta})= \sum_{j=1}^J \pi_j \phi_{D \cdot I}(\mathbf{y}; \boldsymbol{\lambda}_{j}+ \mathbf{X}\boldsymbol{\beta}, \boldsymbol{\Sigma}_{j}),
\end{equation}
where $J=K^I$, $\mathbf{y}=\left(\mathbf{y}'_1, \ldots, \mathbf{y}'_i, \ldots, \mathbf{y}'_I\right)'$,
$\mathbf{X}=\left[\mathbf{X}_{1} \ldots \mathbf{X}_{i} \ldots \mathbf{X}_{I}\right]'$, $\pi_j=\prod_{i=1}^I  \pi_{k_i^{(j)}}$, $\boldsymbol{\lambda}_j = (\boldsymbol{\lambda}_{k_1^{(j)}}', \ldots, \boldsymbol{\lambda}_{k_i^{(j)}}', \ldots, \boldsymbol{\lambda}_{k_I^{(j)}}')'$, $\boldsymbol{\Sigma}_j=$ $diag(\boldsymbol{\Sigma}_{k_1^{(j)}},$ $\ldots, \boldsymbol{\Sigma}_{k_i^{(j)}}, \ldots, \boldsymbol{\Sigma}_{k_I^{(j)}})$ is a block diagonal matrix, and $\textbf{k}^{(j)}=\left(k_1^{(j)}, \ldots, k_I^{(j)}\right)'$ is the $j$th element of the set $A_{K,I}=\{(k_1, \ldots, k_I)': k_i \in \{1, \ldots, K\}, \ i=1, \ldots, I\}$ containing the $J$ arrangements of the first $K$ positive integers amongst $I$ with repetitions.
The proof can be completed by showing that mixtures (\ref{eq:jointpdf2}) are identifiable. The proof of this latter result can be found in \citet{soffritti2011}.
\end{proof}

\subsection{Score vector and Hessian matrix}\label{sec:score}

Given a random sample $\mathbf{y}_1, \ldots, \mathbf{y}_i, \ldots, \mathbf{y}_I$ from the model (\ref{eq:mixreg}), the
log-likelihood is
\begin{equation}\label{eq:loglik}
l(\boldsymbol{\theta})=\sum_{i=1}^I \ln \left(\sum_{k=1}^K \pi_k \phi_D(\mathbf{y}_i; \boldsymbol{\lambda}_{k}+ \mathbf{X}_{i}'\boldsymbol{\beta}, \boldsymbol{\Sigma}_{k})\right).
\end{equation}
The log-likelihood (\ref{eq:loglik}) can be used to derive the ML estimator of $\boldsymbol{\theta}$. Furthermore, \citet{redner1984} showed that, under suitable conditions, an estimate of the asymptotic variance of the ML estimator of the parameters in a finite mixture model can be obtained using the Hessian matrix. In order to obtain the score vector and the Hessian matrix the following notation is introduced.
Let
\begin{equation}\nonumber
f_{ki}  = \frac{\pi_k}{(2\pi)^{D/2}\det\left( \boldsymbol{\Sigma}_k\right)^{1/2}}\exp\left[ -\frac{1}{2}\left(\mathbf{y}_i-\boldsymbol{\lambda}_k-\mathbf{X}'_{i}\boldsymbol{\beta} \right)'\boldsymbol{\Sigma}_k^{-1}\left(\mathbf{y}_i-\boldsymbol{\lambda}_k-\mathbf{X}'_{i}\boldsymbol{\beta} \right) \right];
\end{equation}
$\alpha_{ki}=\frac{f_{ki}}{\left(\sum_{l=1}^K f_{li}\right)}$; $\mathbf{a}_k=\frac{1}{\pi_k}\mathbf{e}_k$ for $k=1,\ldots,K-1$ and $\mathbf{a}_K=-\frac{1}{\pi_K}\boldsymbol{1}_{(K-1)}$, where $\mathbf{e}_k$ is the $k$th column of $\mathbf{I}_{(K-1)}$ (the identity matrix of order $K-1$) and $\boldsymbol{1}_{(K-1)}$ denotes the $(K-1)$-dimensional vector having each component equal to 1;
$\mathbf{b}_{ki}=\boldsymbol{\Sigma}_k^{-1}\left(\mathbf{y}_i-\boldsymbol{\lambda}_k-\mathbf{X}_{i}'\boldsymbol{\beta} \right)$; $\mathbf{B}_{ki}=\boldsymbol{\Sigma}_k^{-1}-\mathbf{b}_{ki}\mathbf{b}'_{ki}$;
\begin{equation}\nonumber
\mathbf{c}_{ki}=\left[\begin{array}{c}
\mathbf{b}_{ki} \\
-\frac{1}{2} \mathbf{G}' \mathrm{vec}\left(\mathbf{B}_{ki}\right)
\end{array}\right],
\end{equation}
where $\mathbf{G}$ denotes the duplication matrix and $\mathrm{vec}(\mathbf{B}_{ki})$ denotes the vector formed by stacking the columns of the matrix $\mathbf{B}_{ki}$ one underneath the other \citep[see, e.g.,][]{schott2005}.

\begin{thm}\label{theo:score}
The score vector for the parameters of model (\ref{eq:mixreg}) is composed of the sub-vectors
$\frac{\partial}{\partial\boldsymbol{\pi}'}l\left(\boldsymbol{\theta}\right)$, $\frac{\partial}{\partial\boldsymbol{\beta}'}l\left(\boldsymbol{\theta}\right)$, $\frac{\partial}{\partial\boldsymbol{\theta}'_1}l\left(\boldsymbol{\theta}\right), \ldots, \frac{\partial}{\partial\boldsymbol{\theta}'_K}l\left(\boldsymbol{\theta}\right)$,
where
\begin{eqnarray} \nonumber
\frac{\partial}{\partial\boldsymbol{\pi}}l\left(\boldsymbol{\theta}\right) & = & \sum_{i=1}^I\bar{\mathbf{a}}_i,\\ \nonumber
\frac{\partial}{\partial\boldsymbol{\beta}}l\left(\boldsymbol{\theta}\right) & = & \sum_{i=1}^I\mathbf{X}_{i}\bar{\mathbf{b}}_i, \\ \nonumber
\frac{\partial}{\partial\boldsymbol{\theta}_k}l\left(\boldsymbol{\theta}\right) & = & \sum_{i=1}^I \alpha_{ki}\mathbf{c}_{ki}, \ \ k=1, \ldots, K,
\end{eqnarray}
with $\bar{\mathbf{a}}_i=\sum_{k=1}^K \alpha_{ki}\mathbf{a}_k$ and $\bar{\mathbf{b}}_i=\sum_{k=1}^K \alpha_{ki}\mathbf{b}_{ki}$.
\end{thm}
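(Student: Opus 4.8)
The plan is to differentiate the log-likelihood (\ref{eq:loglik}) block by block, exploiting the notation just introduced. Writing $l(\boldsymbol{\theta})=\sum_{i=1}^I\ln\left(\sum_{k=1}^K f_{ki}\right)$, the first step is the generic chain-rule identity that, for any sub-vector $\boldsymbol{\eta}$ of $\boldsymbol{\theta}$,
\begin{equation}\nonumber
\frac{\partial}{\partial\boldsymbol{\eta}}l(\boldsymbol{\theta})=\sum_{i=1}^I\frac{1}{\sum_{l=1}^K f_{li}}\sum_{k=1}^K\frac{\partial f_{ki}}{\partial\boldsymbol{\eta}}=\sum_{i=1}^I\sum_{k=1}^K\alpha_{ki}\frac{\partial}{\partial\boldsymbol{\eta}}\ln f_{ki},
\end{equation}
obtained from $\partial f_{ki}/\partial\boldsymbol{\eta}=f_{ki}\,\partial\ln f_{ki}/\partial\boldsymbol{\eta}$ together with $\alpha_{ki}=f_{ki}/\sum_{l=1}^K f_{li}$. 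Everything then reduces to differentiating $\ln f_{ki}=\ln\pi_k-\frac{D}{2}\ln(2\pi)-\frac{1}{2}\ln\det\boldsymbol{\Sigma}_k-\frac{1}{2}(\mathbf{y}_i-\boldsymbol{\lambda}_k-\mathbf{X}_i'\boldsymbol{\beta})'\boldsymbol{\Sigma}_k^{-1}(\mathbf{y}_i-\boldsymbol{\lambda}_k-\mathbf{X}_i'\boldsymbol{\beta})$ with respect to $\boldsymbol{\pi}$, $\boldsymbol{\beta}$ and $\boldsymbol{\theta}_k$ separately.

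For $\boldsymbol{\pi}$ the only dependence is through $\ln\pi_k$, bearing in mind the constraint $\pi_K=1-\sum_{l=1}^{K-1}\pi_l$; this yields $\partial\ln f_{ki}/\partial\boldsymbol{\pi}=\pi_k^{-1}\mathbf{e}_k$ for $k<K$ and $-\pi_K^{-1}\boldsymbol{1}_{(K-1)}$ for $k=K$, which is precisely $\mathbf{a}_k$, whence $\frac{\partial}{\partial\boldsymbol{\pi}}l(\boldsymbol{\theta})=\sum_{i=1}^I\sum_{k=1}^K\alpha_{ki}\mathbf{a}_k=\sum_{i=1}^I\bar{\mathbf{a}}_i$. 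For $\boldsymbol{\beta}$ only the quadratic form matters: denoting the residual by $\mathbf{y}_i-\boldsymbol{\lambda}_k-\mathbf{X}_i'\boldsymbol{\beta}$, whose derivative with respect to $\boldsymbol{\beta}'$ is $-\mathbf{X}_i'$, and differentiating the quadratic form with its symmetric matrix $\boldsymbol{\Sigma}_k^{-1}$, one obtains $\partial\ln f_{ki}/\partial\boldsymbol{\beta}=\mathbf{X}_i\boldsymbol{\Sigma}_k^{-1}(\mathbf{y}_i-\boldsymbol{\lambda}_k-\mathbf{X}_i'\boldsymbol{\beta})=\mathbf{X}_i\mathbf{b}_{ki}$, hence $\frac{\partial}{\partial\boldsymbol{\beta}}l(\boldsymbol{\theta})=\sum_{i=1}^I\mathbf{X}_i\bar{\mathbf{b}}_i$. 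For $\boldsymbol{\theta}_k$ one first notes that $f_{li}$ with $l\neq k$ does not involve $\boldsymbol{\lambda}_k$ or $\boldsymbol{\Sigma}_k$, so only the $k$th summand survives, i.e. $\frac{\partial}{\partial\boldsymbol{\theta}_k}l(\boldsymbol{\theta})=\sum_{i=1}^I\alpha_{ki}\,\partial\ln f_{ki}/\partial\boldsymbol{\theta}_k$; the $\boldsymbol{\lambda}_k$-sub-block is handled exactly as the $\boldsymbol{\beta}$-block (now the residual derivative is $-\mathbf{I}_D$) and gives $\mathbf{b}_{ki}$.

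The step I expect to be the main obstacle is the derivative with respect to $\mathrm{v}(\boldsymbol{\Sigma}_k)$, because of the symmetry constraint on $\boldsymbol{\Sigma}_k$. The plan here is to first differentiate $\ln f_{ki}$ with respect to $\mathrm{vec}(\boldsymbol{\Sigma}_k)$, treating $\boldsymbol{\Sigma}_k$ momentarily as unrestricted, via the standard matrix-calculus identities $\partial\ln\det\boldsymbol{\Sigma}_k/\partial\mathrm{vec}(\boldsymbol{\Sigma}_k)=\mathrm{vec}(\boldsymbol{\Sigma}_k^{-1})$ and $\partial\big((\mathbf{y}_i-\boldsymbol{\lambda}_k-\mathbf{X}_i'\boldsymbol{\beta})'\boldsymbol{\Sigma}_k^{-1}(\mathbf{y}_i-\boldsymbol{\lambda}_k-\mathbf{X}_i'\boldsymbol{\beta})\big)/\partial\mathrm{vec}(\boldsymbol{\Sigma}_k)=-\mathrm{vec}(\mathbf{b}_{ki}\mathbf{b}'_{ki})$; combining them gives $-\frac{1}{2}\mathrm{vec}(\boldsymbol{\Sigma}_k^{-1}-\mathbf{b}_{ki}\mathbf{b}'_{ki})=-\frac{1}{2}\mathrm{vec}(\mathbf{B}_{ki})$. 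Then, since $\mathrm{vec}(\boldsymbol{\Sigma}_k)=\mathbf{G}\,\mathrm{v}(\boldsymbol{\Sigma}_k)$ with $\mathbf{G}$ the duplication matrix, the chain rule yields $\partial\ln f_{ki}/\partial\mathrm{v}(\boldsymbol{\Sigma}_k)=-\frac{1}{2}\mathbf{G}'\mathrm{vec}(\mathbf{B}_{ki})$; the point requiring care is the legitimacy of this last step despite the constraint on $\mathrm{vec}(\boldsymbol{\Sigma}_k)$, which follows from the properties of the duplication matrix \citep[see, e.g.,][]{schott2005}. Stacking the $\boldsymbol{\lambda}_k$- and $\mathrm{v}(\boldsymbol{\Sigma}_k)$-sub-blocks gives $\partial\ln f_{ki}/\partial\boldsymbol{\theta}_k=\mathbf{c}_{ki}$, so $\frac{\partial}{\partial\boldsymbol{\theta}_k}l(\boldsymbol{\theta})=\sum_{i=1}^I\alpha_{ki}\mathbf{c}_{ki}$; concatenating the blocks then yields the full score vector.
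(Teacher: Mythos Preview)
Your proposal is correct and follows essentially the same route as the paper: both reduce the problem to computing $\partial\ln f_{ki}/\partial\boldsymbol{\eta}$ via the identity $\partial l/\partial\boldsymbol{\eta}=\sum_i\sum_k\alpha_{ki}\,\partial\ln f_{ki}/\partial\boldsymbol{\eta}$, and both handle the $\mathrm{v}(\boldsymbol{\Sigma}_k)$ block through the same matrix-calculus identities together with the duplication-matrix relation $\mathrm{vec}(\boldsymbol{\Sigma}_k)=\mathbf{G}\,\mathrm{v}(\boldsymbol{\Sigma}_k)$. The only difference is presentational: the paper works with first-order differentials in the Magnus--Neudecker style (writing $\mathrm{d}l(\boldsymbol{\theta})$ and reading off all blocks simultaneously), whereas you compute each partial-derivative block directly; these are equivalent formulations of the same calculation.
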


\begin{thm}\label{theo:hessian}
The Hessian matrix $H(\boldsymbol{\theta})$ for the parameters of model (\ref{eq:mixreg}) is equal to
\begin{equation}\label{eq:hessian}
\left[
  \begin{array}{ccccc}
     \frac{\partial^2}{\partial\boldsymbol{\pi}\partial\boldsymbol{\pi}'}l\left(\boldsymbol{\theta}\right) &
     \frac{\partial^2}{\partial\boldsymbol{\pi}\partial\boldsymbol{\beta}'}l\left(\boldsymbol{\theta}\right)&
     \frac{\partial^2}{\partial\boldsymbol{\pi}\partial\boldsymbol{\theta}_1'}l\left(\boldsymbol{\theta}\right)&
     \cdots & \frac{\partial^2}{\partial\boldsymbol{\pi}\partial\boldsymbol{\theta}_K'}l\left(\boldsymbol{\theta}\right) \\
   \frac{\partial^2}{\partial\boldsymbol{\beta}\partial\boldsymbol{\pi}'}l\left(\boldsymbol{\theta}\right) &
   \frac{\partial^2}{\partial\boldsymbol{\beta}\partial\boldsymbol{\beta}'}l\left(\boldsymbol{\theta}\right)&
   \frac{\partial^2}{\partial\boldsymbol{\beta}\partial\boldsymbol{\theta}_1'}l\left(\boldsymbol{\theta}\right)&
   \cdots & \frac{\partial^2}{\partial\boldsymbol{\beta}\partial\boldsymbol{\theta}_K'}l\left(\boldsymbol{\theta}\right) \\
    \frac{\partial^2}{\partial\boldsymbol{\theta}_1\partial\boldsymbol{\pi}'}l\left(\boldsymbol{\theta}\right) &
    \frac{\partial^2}{\partial\boldsymbol{\theta}_1\partial\boldsymbol{\beta}'}l\left(\boldsymbol{\theta}\right)&
    \frac{\partial^2}{\partial\boldsymbol{\theta}_1\partial\boldsymbol{\theta}_1'}l\left(\boldsymbol{\theta}\right)&
    \cdots & \frac{\partial^2}{\partial\boldsymbol{\theta}_1\partial\boldsymbol{\theta}_K'}l\left(\boldsymbol{\theta}\right) \\      \cdots & \cdots & \cdots & \cdots & \cdots \\
   \frac{\partial^2}{\partial\boldsymbol{\theta}_K\partial\boldsymbol{\pi}'}l\left(\boldsymbol{\theta}\right) &
   \frac{\partial^2}{\partial\boldsymbol{\theta}_K\partial\boldsymbol{\beta}'}l\left(\boldsymbol{\theta}\right)&
   \frac{\partial^2}{\partial\boldsymbol{\theta}_K\partial\boldsymbol{\theta}_1'}l\left(\boldsymbol{\theta}\right)&
   \cdots & \frac{\partial^2}{\partial\boldsymbol{\theta}_K\partial\boldsymbol{\theta}_K'}l\left(\boldsymbol{\theta}\right)      \end{array}
\right],
\end{equation}
where
\begin{eqnarray}\nonumber
\frac{\partial^2}{\partial\boldsymbol{\pi}\partial\boldsymbol{\pi}'}l\left(\boldsymbol{\theta}\right) & = &
-\sum_{i=1}^I \bar{\mathbf{a}}_i\bar{\mathbf{a}}'_i,\\ \nonumber
\frac{\partial^2}{\partial\boldsymbol{\pi}\partial\boldsymbol{\beta}'}l\left(\boldsymbol{\theta}\right) & = &
\sum_{i=1}^I \left[\left(\sum_{k=1}^K\alpha_{ki}\mathbf{a}_k \mathbf{b}'_{ki}\right)-\bar{\mathbf{a}}_i \bar{\mathbf{b}}'_{i}\right]\mathbf{X}'_{i},\\ \nonumber
\frac{\partial^2}{\partial\boldsymbol{\pi}\partial\boldsymbol{\theta}_k'}l\left(\boldsymbol{\theta}\right) & = &
\sum_{i=1}^I \alpha_{ki}\left(\mathbf{a}_{k}-\bar{\mathbf{a}}_i\right)\mathbf{c}'_{ki},  \ \ k=1, \ldots, K,\\ \nonumber
\frac{\partial^2}{\partial\boldsymbol{\beta}\partial\boldsymbol{\beta}'}l\left(\boldsymbol{\theta}\right) & = &
- \sum_{i=1}^I \mathbf{X}_{i}\left[\bar{\mathbf{B}}_{i}+\bar{\mathbf{b}}_{i}\bar{\mathbf{b}}'_{i}\right]\mathbf{X}'_{i},\\ \nonumber
 \frac{\partial^2}{\partial\boldsymbol{\beta}\partial\boldsymbol{\theta}_k'}l\left(\boldsymbol{\theta}\right) & = &
- \sum_{i=1}^I    \alpha_{ki}\mathbf{X}_{i}\left[\mathbf{F}_{ki}-\left(\mathbf{b}_{ki}-\bar{\mathbf{b}}_{i}\right)\mathbf{c}'_{ki}\right], \ \ k=1, \ldots, K,\\ \nonumber \frac{\partial^2}{\partial\boldsymbol{\theta}_k\partial\boldsymbol{\theta}_k'}l\left(\boldsymbol{\theta}\right) & = &
- \sum_{i=1}^I
\alpha_{ki}\left[\mathbf{C}_{ki}-\left(1-\alpha_{ki}\right)\mathbf{c}_{ki}\mathbf{c}'_{ki}\right], \ \ k=1, \ldots, K,\\ \nonumber
\frac{\partial^2}{\partial\boldsymbol{\theta}_k\partial\boldsymbol{\theta}'_h}l\left(\boldsymbol{\theta}\right) & = &
- \sum_{i=1}^I
\alpha_{ki}\alpha_{hi}\mathbf{c}_{ki}\mathbf{c}'_{hi}, \ \ \forall k \neq h,
\end{eqnarray}
with $\bar{\mathbf{B}}_{i}=\sum_{k=1}^K\alpha_{ki}\left(\boldsymbol{\Sigma}_k^{-1}-\mathbf{b}_{ki}\mathbf{b}'_{ki}\right)$,
$\mathbf{F}_{ki}=\left[\begin{array}{cc}
\boldsymbol{\Sigma}_k^{-1} & \left(\mathbf{b}'_{ki}\otimes \boldsymbol{\Sigma}_k^{-1}\right)\mathbf{G}
\end{array}\right]$ and
\begin{equation}\nonumber
\mathbf{C}_{ki}=\left[\begin{array}{cc}
\boldsymbol{\Sigma}_k^{-1} & \left(\mathbf{b}'_{ki}\otimes \boldsymbol{\Sigma}_k^{-1}\right)\mathbf{G}\\
 \mathbf{G}'\left(\mathbf{b}_{ki}\otimes \boldsymbol{\Sigma}_k^{-1}\right) & \frac{1}{2}\mathbf{G}'\left[\left(\boldsymbol{\Sigma}_k^{-1}-2\mathbf{B}_{ki}\right)\otimes \boldsymbol{\Sigma}_k^{-1}\right]\mathbf{G}\end{array}\right].
\end{equation}
\end{thm}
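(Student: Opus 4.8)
The plan is to obtain every block of the Hessian (\ref{eq:hessian}) by differentiating once more the three score sub-vectors of Theorem~\ref{theo:score}. Since $l(\boldsymbol{\theta})=\sum_{i=1}^I l_i(\boldsymbol{\theta})$ with $l_i(\boldsymbol{\theta})=\ln\left(\sum_{k=1}^K f_{ki}\right)$, each block is a sum over $i$ of a per-observation block, so I would fix $i$ and restore the sum only at the end. The identity that drives the whole calculation is the derivative of a posterior weight: for any sub-vector $\boldsymbol{\psi}$ of $\boldsymbol{\theta}$, writing $\mathbf{s}_{ki}=\partial\ln f_{ki}/\partial\boldsymbol{\psi}$ for the within-component score, the quotient rule applied to $\alpha_{ki}=f_{ki}/\sum_{l}f_{li}$, together with $\partial f_{li}/\partial\boldsymbol{\psi}=f_{li}\mathbf{s}_{li}$, gives $\partial\alpha_{ki}/\partial\boldsymbol{\psi}'=\alpha_{ki}\left(\mathbf{s}_{ki}-\sum_{l=1}^K\alpha_{li}\mathbf{s}_{li}\right)'$. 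By the same computation that proves Theorem~\ref{theo:score}, $\mathbf{s}_{ki}$ equals $\mathbf{a}_k$ when $\boldsymbol{\psi}=\boldsymbol{\pi}$, equals $\mathbf{X}_i\mathbf{b}_{ki}$ when $\boldsymbol{\psi}=\boldsymbol{\beta}$, equals $\mathbf{c}_{ki}$ when $\boldsymbol{\psi}=\boldsymbol{\theta}_k$, and equals $\mathbf{0}$ when $\boldsymbol{\psi}=\boldsymbol{\theta}_h$ with $h\neq k$; hence $\sum_l\alpha_{li}\mathbf{s}_{li}$ reduces to $\bar{\mathbf{a}}_i$, $\mathbf{X}_i\bar{\mathbf{b}}_i$ and $\alpha_{ki}\mathbf{c}_{ki}$ in the three relevant cases.

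Every score sub-vector therefore has the form $\sum_{i}\sum_k\alpha_{ki}\mathbf{t}_{ki}$ with $\mathbf{t}_{ki}\in\{\mathbf{a}_k,\,\mathbf{X}_i\mathbf{b}_{ki},\,\mathbf{c}_{ki}\}$ (for the $\boldsymbol{\theta}_k$-score the inner sum collapses to the single term $\alpha_{ki}\mathbf{c}_{ki}$). Differentiating once more by the product rule splits each Hessian block into an ``$\alpha$-part'' $\sum_i\sum_k\mathbf{t}_{ki}\,(\partial\alpha_{ki}/\partial\boldsymbol{\psi}')$, which the identity above rewrites through $\mathbf{a}_k$, $\mathbf{b}_{ki}$, $\mathbf{c}_{ki}$ and their $\alpha$-weighted averages, plus a ``curvature part'' $\sum_i\sum_k\alpha_{ki}\,(\partial\mathbf{t}_{ki}/\partial\boldsymbol{\psi}')$. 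The curvature part vanishes whenever $\mathbf{t}_{ki}$ does not involve $\boldsymbol{\psi}$ --- which happens for every block involving $\boldsymbol{\pi}$ except the $\boldsymbol{\pi}$--$\boldsymbol{\pi}$ one, and for the off-diagonal $\boldsymbol{\theta}_k$--$\boldsymbol{\theta}_h$ blocks ($k\neq h$) --- and is otherwise read off from $\partial\mathbf{a}_k/\partial\boldsymbol{\pi}'=\partial^2\ln\pi_k/\partial\boldsymbol{\pi}\partial\boldsymbol{\pi}'=-\mathbf{a}_k\mathbf{a}_k'$, $\partial(\mathbf{X}_i\mathbf{b}_{ki})/\partial\boldsymbol{\beta}'=-\mathbf{X}_i\boldsymbol{\Sigma}_k^{-1}\mathbf{X}_i'$, $\partial(\mathbf{X}_i\mathbf{b}_{ki})/\partial\boldsymbol{\theta}_k'=-\mathbf{X}_i\mathbf{F}_{ki}$ (with $\mathbf{F}_{ki}=\partial\mathbf{b}_{ki}/\partial\boldsymbol{\theta}_k'$) and $\partial\mathbf{c}_{ki}/\partial\boldsymbol{\theta}_k'=-\mathbf{C}_{ki}$, where $\mathbf{C}_{ki}$ is minus the Hessian of $\ln\phi_D(\mathbf{y}_i;\boldsymbol{\lambda}_k+\mathbf{X}_i'\boldsymbol{\beta},\boldsymbol{\Sigma}_k)$ with respect to $(\boldsymbol{\lambda}_k,\mathrm{v}(\boldsymbol{\Sigma}_k))$. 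These within-component derivatives are standard matrix-differential-calculus computations; the ingredients one really has to invoke are $\partial\,\mathrm{vec}(\boldsymbol{\Sigma}_k^{-1})/\partial\,\mathrm{v}(\boldsymbol{\Sigma}_k)'=-(\boldsymbol{\Sigma}_k^{-1}\otimes\boldsymbol{\Sigma}_k^{-1})\mathbf{G}$, $\partial\ln\det(\boldsymbol{\Sigma}_k)/\partial\,\mathrm{v}(\boldsymbol{\Sigma}_k)=\mathbf{G}'\mathrm{vec}(\boldsymbol{\Sigma}_k^{-1})$, $\mathrm{vec}(\boldsymbol{\Sigma}_k)=\mathbf{G}\,\mathrm{v}(\boldsymbol{\Sigma}_k)$ and the identity $\mathbf{G}'(\mathbf{A}\otimes\mathbf{B})\mathbf{G}=\mathbf{G}'(\mathbf{B}\otimes\mathbf{A})\mathbf{G}$ for symmetric $\mathbf{A},\mathbf{B}$.

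It then remains to add the two parts of each block and recognise the expressions in the statement. The off-diagonal structure of the parameter vector ($\partial\ln f_{li}/\partial\boldsymbol{\theta}_k=\mathbf{0}$ for $l\neq k$) makes the mixed $\boldsymbol{\pi}$--$\boldsymbol{\theta}_k$, $\boldsymbol{\beta}$--$\boldsymbol{\theta}_k$ and $\boldsymbol{\theta}_k$--$\boldsymbol{\theta}_h$ blocks collapse through $\sum_l\alpha_{li}\delta_{lk}\mathbf{t}_{li}=\alpha_{ki}\mathbf{t}_{ki}$; the $\boldsymbol{\beta}$--$\boldsymbol{\beta}$ block additionally requires $\sum_k\alpha_{ki}\mathbf{b}_{ki}\mathbf{b}_{ki}'-\sum_k\alpha_{ki}\boldsymbol{\Sigma}_k^{-1}=-\bar{\mathbf{B}}_i$ to produce $\bar{\mathbf{B}}_i$; and the blocks of (\ref{eq:hessian}) below the block-diagonal are the transposes of the ones listed, by symmetry of $H(\boldsymbol{\theta})$, which also serves as a consistency check. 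I expect the only genuine obstacle to be computational: the derivation of $\mathbf{C}_{ki}$, and in particular of its lower-right $\mathrm{v}(\boldsymbol{\Sigma}_k)$ sub-block, which calls for differentiating $-\frac{1}{2}\mathbf{G}'\mathrm{vec}(\boldsymbol{\Sigma}_k^{-1}-\mathbf{b}_{ki}\mathbf{b}_{ki}')$ once more while carefully tracking the duplication matrix and the Kronecker products and accounting for the fact that $\mathbf{b}_{ki}$ itself depends on $\boldsymbol{\Sigma}_k$; this is where the term $\boldsymbol{\Sigma}_k^{-1}-2\mathbf{B}_{ki}$ inside $\mathbf{C}_{ki}$ arises, as the combination of the contributions of the $-\frac{1}{2}\ln\det(\boldsymbol{\Sigma}_k)$ term and of the quadratic-form term in the Gaussian log-density. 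Everything else is routine bookkeeping.
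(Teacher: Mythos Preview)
Your proposal is correct and will produce every block of the Hessian. The route you take, however, is organised differently from the paper's. The paper works entirely in the language of matrix differentials: it computes the second-order differential $\mathrm{d}^2 l(\boldsymbol{\theta})$ in one go, using the mixture identity
\[
\mathrm{d}^2\ln\!\left(\textstyle\sum_k f_{ki}\right)=\textstyle\sum_k\alpha_{ki}\,\mathrm{d}^2\ln f_{ki}+\sum_k\alpha_{ki}\,(\mathrm{d}\ln f_{ki})^2-\left(\sum_k\alpha_{ki}\,\mathrm{d}\ln f_{ki}\right)^2
\]
(taken from Boldea and Magnus), expands $(\mathrm{d}\ln f_{ki})^2$ and $\mathrm{d}^2\ln f_{ki}$ as quadratic forms in $(\mathrm{d}\boldsymbol{\pi},\mathrm{d}\boldsymbol{\beta},\mathrm{d}\boldsymbol{\theta}_1,\ldots,\mathrm{d}\boldsymbol{\theta}_K)$, and then reads off each Hessian block from the coefficients of the resulting bilinear form. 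Your approach instead differentiates the score sub-vectors block by block with the product rule, using the quotient-rule identity $\partial\alpha_{ki}/\partial\boldsymbol{\psi}'=\alpha_{ki}(\mathbf{s}_{ki}-\bar{\mathbf{s}}_i)'$ in place of the Boldea--Magnus formula. The two are mathematically equivalent---your ``$\alpha$-part'' is exactly the paper's $\sum_k\alpha_{ki}(\mathrm{d}\ln f_{ki})^2-(\sum_k\alpha_{ki}\mathrm{d}\ln f_{ki})^2$ term, and your ``curvature part'' is its $\sum_k\alpha_{ki}\mathrm{d}^2\ln f_{ki}$ term---but your organisation is more modular and arguably more elementary, since it avoids the second-differential machinery and the need to quote an external identity. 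The paper's route, on the other hand, reuses the differential expressions already derived for Theorem~\ref{theo:score} and handles all blocks simultaneously. One small slip: in your parenthetical you write $\mathbf{F}_{ki}=\partial\mathbf{b}_{ki}/\partial\boldsymbol{\theta}_k'$, but with the paper's definition of $\mathbf{F}_{ki}$ one has $\partial\mathbf{b}_{ki}/\partial\boldsymbol{\theta}_k'=-\mathbf{F}_{ki}$; your main formula $\partial(\mathbf{X}_i\mathbf{b}_{ki})/\partial\boldsymbol{\theta}_k'=-\mathbf{X}_i\mathbf{F}_{ki}$ is nonetheless correct.
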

Proofs of Theorems~\ref{theo:score} and \ref{theo:hessian} are provided in \ref{proof:score} and \ref{proof:hessian}, respectively.

\begin{rem}
Theorems~\ref{theo:score} and \ref{theo:hessian} provide the score vector and the Hessian matrix not only for the model proposed in this paper, but also for the models introduced in \citet{bartolucci2005} and \citet{soffritti2011}, after some suitable simplifications. Furthermore, they represent a generalization of Theorem 1 in \citet{boldea2009}.
\end{rem}

\subsection{An EM algorithm for maximum likelihood estimation}\label{sec:em}

The score vector and the Hessian matrix described in Section~\ref{sec:score} can be used to compute the ML estimates of the model parameter $\boldsymbol{\theta}$ through a Newton-Raphson algorithm for the maximisation of $l(\boldsymbol{\theta})$ in equation (\ref{eq:loglik}). However, the evaluation of the Hessian matrix at each iteration can be computationally expensive, especially with large samples. In order to avoid this problem, in this Section an EM algorithm is developed, using the approach for incomplete-data problems \citep{dempster1977,mclachlan2008}. This approach is widely employed in finite mixture models, where the source of unobservable information is the specific component of the mixture model that generates each sample observation. Specifically, this unobservable information for the $i$th observation can be described by the $K$-dimensional vector $\mathbf{z}'_i=\left(z_{i1}, \ldots, z_{iK}\right)$, where
$z_{ik}=1$ when $\mathbf{y}_i$ is generated from the $k$th component, and $z_{ik}=0$ otherwise, for $k=1, \ldots, K$.
Thus, $\sum_{k=1}^K z_{ik}=1$, $i=1, \ldots, I$.

Consider the following hierarchical representation for $\mathbf{y}_i|\mathbf{X}_i$:
\begin{equation}\nonumber
  \mathbf{z}_i  \sim mult(1, \pi_1, \ldots, \pi_K),
\end{equation}
\begin{equation}\nonumber
  \mathbf{y}_i|(\mathbf{X}_i, z_{ik}=1) \sim N_D\left(\boldsymbol{\lambda}_k+ \mathbf{X}'_i \boldsymbol{\beta}, \boldsymbol{\Sigma}_k\right),
\end{equation}
where $mult(1, \pi_1, \ldots, \pi_K)$ denotes the $K$-dimensional multinomial distribution with parameters $\pi_1, \ldots, \pi_K$, and assume that this representation independently holds for $i=1, \ldots, I$. Then, the complete-data log-likelihood $l_c(\boldsymbol{\theta})$ of model (\ref{eq:mixreg}) can be expressed as
\begin{equation}\label{eq:lc}
l_c\left(\boldsymbol{\theta}\right)=\sum_{i=1}^I \sum_{k=1}^K z_{ik}\ln f_{ki}.
\end{equation}

The first order differential of $l_c(\boldsymbol{\theta})$ is
\begin{eqnarray}\nonumber
\mathrm{d}l_c\left(\boldsymbol{\theta}\right) & = & \sum_{i=1}^I \sum_{k=1}^K z_{ik} \mathrm{d}\ln f_{ki}\\ \nonumber
& = & \left(\mathrm{d}\boldsymbol{\pi}\right)'\sum_{k=1}^K z_{\cdot k}\mathbf{a}_k + \left(\mathrm{d}\boldsymbol{\beta}\right)'\sum_{i=1}^I \sum_{k=1}^K z_{ik}\mathbf{X}_{i}\mathbf{b}_{ki}
+\sum_{k=1}^K \left(\mathrm{d}\boldsymbol{\theta}_k\right)'\sum_{i=1}^I z_{ik}\mathbf{c}_{ki}\\ \nonumber
& =& \left(\mathrm{d}\boldsymbol{\pi}\right)'\sum_{k=1}^K z_{\cdot k}\mathbf{a}_k \\ \label{eq:dlc2}
& & + \sum_{i=1}^I \sum_{k=1}^K z_{ik}\left[\left(\mathrm{d}\boldsymbol{\lambda}_k\right)'+ \left(\mathrm{d}\boldsymbol{\beta}\right)'\mathbf{X}_{i}\right]\mathbf{b}_{ki}\\ \label{eq:dlc3}
& & -\frac{1}{2}\sum_{k=1}^K \mathrm{d}\left(\mathrm{v}\boldsymbol{\Sigma}_k\right)'\mathbf{G}'\mathrm{vec}\left(\sum_{i=1}^I z_{ik}\mathbf{B}_{ki}\right)
\end{eqnarray}
where the second and third equalities are obtained using equation (\ref{eq:dlnfki2}) in \ref{proof:score}, and $z_{\cdot k}=\sum_{i=1}^I z_{ik}$.

To determine the solution of each M step of the EM algorithm, it is convenient to introduce the following notation.
Let $\mathrm{d}l_{c2}$ and $\mathrm{d}l_{c3}$ denote the expressions in equations (\ref{eq:dlc2}) and (\ref{eq:dlc3}), respectively. Let $\boldsymbol{\gamma}=\left(\boldsymbol{\lambda}'_{1}, \ldots, \boldsymbol{\lambda}'_{K}, \boldsymbol{\beta}'\right)'$ be the $(D \cdot K +P)$-dimensional vector comprising the intercepts of all components and regression coefficients for all dependent variables. $\mathbf{O}_k$ is a matrix of dimension $(D\cdot K)\times D$ obtained extracting the columns of the matrix $\mathbf{I}_{(D\cdot K)}$ from the $(1+(k-1)\cdot D)$th to the $(D+(k-1)\cdot D)$th, for $k=1,\ldots, K$. Furthermore, let
$\mathbf{X}_{ki}=\left[ \begin{array}{c}
\mathbf{O}_k\\
\mathbf{X}_i
\end{array}\right]$; this is a matrix of dimension $\left(D\cdot K+P\right)\times D$ such that $\mathbf{X}'_{ki}\boldsymbol{\gamma}=\boldsymbol{\lambda}_k+\mathbf{X}'_{i}\boldsymbol{\beta}$ and
$\mathbf{X}'_{ki}\mathrm{d}\boldsymbol{\gamma}=\mathrm{d}\boldsymbol{\lambda}_k+\mathbf{X}'_{i}\mathrm{d}\boldsymbol{\beta}$.
Using this latter notation, the expressions of $\mathrm{d}l_{c2}$ and $\mathrm{d}l_{c3}$ in equations (\ref{eq:dlc2}) and (\ref{eq:dlc3}) turn into
\begin{eqnarray}\nonumber
\mathrm{d}l_{c2} & = & \sum_{i=1}^I \sum_{k=1}^K z_{ik}\left(\mathrm{d}\boldsymbol{\gamma}\right)'\mathbf{X}_{ki}\mathbf{b}_{ki}\\ \nonumber
 & = & \left(\mathrm{d}\boldsymbol{\gamma}\right)'\sum_{i=1}^I \sum_{k=1}^K  z_{ik}\mathbf{X}_{ki}\boldsymbol{\Sigma}_{k}^{-1}\left(\mathbf{y}_i-\mathbf{X}'_{ki}\boldsymbol{\gamma}\right)\\ \nonumber
 & = & \left(\mathrm{d}\boldsymbol{\gamma}\right)'\sum_{i=1}^I \sum_{k=1}^K  z_{ik}\mathbf{X}_{ki}\boldsymbol{\Sigma}_{k}^{-1}\mathbf{y}_i\\  \label{eq:dlc2b}
& & -\left(\mathrm{d}\boldsymbol{\gamma}\right)'\left(\sum_{i=1}^I \sum_{k=1}^K z_{ik}\mathbf{X}_{ki}\boldsymbol{\Sigma}_{k}^{-1}\mathbf{X}'_{ki}\right)\boldsymbol{\gamma},
\end{eqnarray}
\begin{eqnarray}\nonumber
\mathrm{d}l_{c3} & = & -\frac{1}{2}\sum_{k=1}^K \mathrm{d}\left(\mathrm{v}\boldsymbol{\Sigma}_k\right)'\mathbf{G}^{\top} \mathrm{vec}\left(\sum_{i=1}^I z_{ik}\boldsymbol{\Sigma}_k^{-1}-\sum_{i=1}^I z_{ik}\mathbf{b}_{ki}\mathbf{b}'_{ki}\right) \\ \label{eq:dlc3b}
& = & -\frac{1}{2}\sum_{k=1}^K \mathrm{d}\left(\mathrm{v}\boldsymbol{\Sigma}_k\right)'\mathbf{G}' \mathrm{vec}\left(z_{\cdot k}\boldsymbol{\Sigma}_k^{-1}-\boldsymbol{\Sigma}_k^{-1}\mathbf{S}_k\boldsymbol{\Sigma}_k^{-1}\right),
\end{eqnarray}
where $\mathbf{S}_k=\sum_{i=1}^I z_{ik}\left(\mathbf{y}_i-\mathbf{X}'_{ki}\boldsymbol{\gamma}\right)\left(\mathbf{y}_i-\mathbf{X}'_{ki}\boldsymbol{\gamma}\right)'$.
Using equation (\ref{eq:dlc3b}) and some properties of the vec operator \citep[see, in particular,][Theorem 8.11]{schott2005}, it is also possible to write
\begin{eqnarray}\nonumber
\mathrm{d}l_{c3} & = & \frac{1}{2}\sum_{k=1}^K  \mathrm{d}\left(\mathrm{v}\boldsymbol{\Sigma}_k\right)'\mathbf{G}' \mathrm{vec}\left[\boldsymbol{\Sigma}_k^{-1}\left(\mathbf{S}_k-z_{\cdot k}\boldsymbol{\Sigma}_k\right)\boldsymbol{\Sigma}_k^{-1}\right] \\ \label{eq:dlc3c}
& = & \frac{1}{2}\sum_{k=1}^K \mathrm{d}\left(\mathrm{v}\boldsymbol{\Sigma}_k\right)'\mathbf{G}' \left(\boldsymbol{\Sigma}_k^{-1}\otimes \boldsymbol{\Sigma}_k^{-1}\right) \mathbf{G}\mathrm{v}\left(\mathbf{S}_k-z_{\cdot k}\boldsymbol{\Sigma}_k\right).
\end{eqnarray}
Thus, the following alternative expression for $\mathrm{d}l_c\left(\boldsymbol{\theta}\right)$ holds:
\begin{eqnarray}\nonumber
\mathrm{d}l_c\left(\boldsymbol{\theta}\right) & = & \left(\mathrm{d}\boldsymbol{\pi}\right)'\sum_{k=1}^K z_{\cdot k}\mathbf{a}_k+ \left(\mathrm{d}\boldsymbol{\gamma}\right)'\sum_{i=1}^I \sum_{k=1}^K  z_{ik}\mathbf{X}_{ki}\boldsymbol{\Sigma}_{k}^{-1}\mathbf{y}_i\\ \nonumber
& & -\left(\mathrm{d}\boldsymbol{\gamma}\right)'\left(\sum_{i=1}^I \sum_{k=1}^K z_{ik}\mathbf{X}_{ki}\boldsymbol{\Sigma}_{k}^{-1}\mathbf{X}'_{ki}\right)\boldsymbol{\gamma}\\ \label{eq:dlc}
 & &+ \frac{1}{2}\sum_{k=1}^K \mathrm{d}\left(\mathrm{v}\boldsymbol{\Sigma}_k\right)'\mathbf{G}' \left(\boldsymbol{\Sigma}_k^{-1}\otimes \boldsymbol{\Sigma}_k^{-1}\right) \mathbf{G}\mathrm{v}\left(\mathbf{S}_k-z_{\cdot k}\boldsymbol{\Sigma}_k\right).
\end{eqnarray}

The first derivatives of $l_c\left(\boldsymbol{\theta}\right)$ with respect to the parameters $\boldsymbol{\pi}$,  $\boldsymbol{\gamma}$ and $\mathrm{v}\boldsymbol{\Sigma}_k$ ($k=1, \ldots, K)$ are:
\begin{eqnarray}\nonumber
\frac{\partial}{\partial\boldsymbol{\pi}}l_c\left(\boldsymbol{\theta}\right) & = & \sum_{k=1}^K z_{\cdot k}\mathbf{a}_k,\\ \nonumber
\frac{\partial}{\partial\boldsymbol{\gamma}}l_c\left(\boldsymbol{\theta}\right) & = &
\sum_{i=1}^I \sum_{k=1}^K  z_{ik}\mathbf{X}_{ki}\boldsymbol{\Sigma}_{k}^{-1}\mathbf{y}_i-
\left(\sum_{i=1}^I \sum_{k=1}^K z_{ik}\mathbf{X}_{ki}\boldsymbol{\Sigma}_{k}^{-1}\mathbf{X}'_{ki}\right)\boldsymbol{\gamma},\\ \nonumber
\frac{\partial}{\partial\left(\mathrm{v}\boldsymbol{\Sigma}_k\right)}l_c\left(\boldsymbol{\theta}\right) & = &
\frac{1}{2}\mathbf{G}' \left(\boldsymbol{\Sigma}_k^{-1}\otimes \boldsymbol{\Sigma}_k^{-1}\right) \mathbf{G}\mathrm{v}\left(\mathbf{S}_k-z_{\cdot k}\boldsymbol{\Sigma}_k\right), \ \ k=1, \ldots, K.
\end{eqnarray}

In order to maximise $l_c\left(\boldsymbol{\theta}\right)$ these derivatives are set equal to zero. By solving the resulting system of equations the following expressions are obtained:
\begin{equation}\label{eq:stimapik}
\pi^*_k=z_{\cdot k}/I, \ \ k=1, \ldots, K,
\end{equation}
and, provided that the matrix $\sum_{i=1}^I \sum_{k=1}^K z_{ik}\mathbf{X}_{ki}\boldsymbol{\Sigma}_{k}^{-1}\mathbf{X}'_{ki}$ is non-singular, \begin{eqnarray}\label{eq:stimagamma}
\boldsymbol{\gamma}^* & = & \left(\sum_{i=1}^I \sum_{k=1}^K z_{ik}\mathbf{X}_{ki}\boldsymbol{\Sigma}_{k}^{-1}\mathbf{X}'_{ki}\right)^{-1}\sum_{i=1}^I \sum_{k=1}^K  z_{ik}\mathbf{X}_{ki}\boldsymbol{\Sigma}_{k}^{-1}\mathbf{y}_i \\ \label{eq:stimasigmak}
\boldsymbol{\Sigma}_k^* & = & z_{\cdot k}^{-1}\mathbf{S}_k, \ \ k=1, \ldots, K.
\end{eqnarray}

Using some initial value for $\boldsymbol{\theta}$, say $\boldsymbol{\theta}^{(0)}$, the E-step on the $(r+1)$th iteration of the EM algorithm is effected by simply replacing $z_{ik}$ by $E_{\boldsymbol{\theta}^{(r)}}(z_{ik}|\mathbf{y}_i, \mathbf{x}_i)= Pr_{\boldsymbol{\theta}^{(r)}}(z_{ik}=1|\mathbf{y}_i, \mathbf{x}_i)=p_{ik}^{(r)}$, which is the posterior probability that $\mathbf{y}_i$ is generated from the $k$th component of the mixture. Namely:
\begin{equation} \label{stimazik} \nonumber
p_{ik}^{(r)}=\frac{\pi^{(r)}_k \phi_D\left(\boldsymbol{y}_i;\boldsymbol{\lambda}^{(r)}_{k}+\mathbf{X}'_{i}\boldsymbol{\beta}^{(r)},\boldsymbol{\Sigma}^{(r)}_{k}\right)}
{\sum_{h=1}^K \pi^{(r)}_h \phi_D\left(\boldsymbol{y}_i;\boldsymbol{\lambda}^{(r)}_{h}+\mathbf{X}'_{i}\boldsymbol{\beta}^{(r)},\boldsymbol{\Sigma}^{(r)}_{h}\right)}.
\end{equation}
On the M-step at the $(r+1)$th iteration of the EM algorithm, the updated estimates of the model parameters $\pi^{(r+1)}_k$, $\boldsymbol{\gamma}^{(r+1)}$ and $\boldsymbol{\Sigma}^{(r+1)}_{k}$ are computed using equations (\ref{eq:stimapik}), (\ref{eq:stimagamma}) and (\ref{eq:stimasigmak}), respectively, where $z_{ik}$ is replaced by $p_{ik}^{(r)}$. As equation (\ref{eq:stimagamma}) depends on the $\boldsymbol{\Sigma}_{k}$'s and equation (\ref{eq:stimasigmak}) depends on $\boldsymbol{\gamma}$, the updated estimates of such parameters at the $(r+1)$th iteration are obtained through an iterative process in which the estimate of $\boldsymbol{\gamma}$ is updated, given an estimate of the $\boldsymbol{\Sigma}_{k}$'s, and vice versa, until convergence.
As far as the choice of $\boldsymbol{\theta}^{(0)}$ is concerned, several strategies can be used \citep[see, e.g.,][]{galimberti2013}. For example, multiple random initializations can be considered. Otherwise, $\boldsymbol{\beta}^{(0)}$ can be obtained by fitting the standard seemingly unrelated regression model; the sample residuals of this model can be used to derive starting values for the remaining parameters, for example by using them to fit a Gaussian mixture model.

\section{Experimental results}\label{sec:experimental}
The usefulness and effectiveness of the methods described in Section~\ref{sec:methods} are illustrated through the analysis of the Australian Institute of Sport (AIS) dataset \citep{cook1994}. Namely, the interest is focused on studying the joint linear dependence of four biometrical variables (body mass index (BMI), sum of skin folds (SSF), percentage of body fat (PBF), lean body mass (LBM)) on three variables providing information about blood composition (red cell count (RCC), white cell count (WCC), plasma ferritine concentration (PFC)). The same problem was investigated by \citet{soffritti2011} using multivariate linear regression models.

A first study is performed to select the regressors to be used for each biometrical variable in a seemingly unrelated linear regression model given by equation (\ref{eq:mixreg}). The main results are summarized in Section~\ref{sec:ricercaregressori}.
Properties of the ML estimates of the regression coefficients for the selected model are numerically evaluated (see Section~\ref{sec:valutaznumerica}). All analyses are performed in the \verb"R" environment \citep{R2012}. A specific function implementing the ML estimation through the EM algorithm and the calculation of the Hessian matrix is used. The starting values of the model parameters are obtained through a strategy that fits Gaussian mixture models to the sample residuals of the classical seemingly unrelated linear regression model. The EM algorithm is stopped when the number of iterations reaches 500 or $|l_\infty ^{(r+1)}-l^{(r)}|<10^{-8}$, where $l^{(r)}$ is the log-likelihood value from iteration $r$, and $l_\infty^{(r+1)}$ is the asymptotic estimate of the log-likelihood at iteration $r+1$ \citep{mcnicholas2008}. The stopping rules for each M step are either when the mean Euclidean distance between two consecutive estimated vectors of the model parameters is lower than $10^{-8}$ or when the number of iterations reaches the maximum of 500.

\subsection{Selection of the regressors}\label{sec:ricercaregressori}
Seemingly unrelated linear regression models from equation (\ref{eq:mixreg}) are
estimated for $K=1, 2, 3$. For each of these values, an exhaustive search is performed to select the relevant regressors for each of the $D=4$ dependent variables. Namely, for each value of $K$, $2^{3 \cdot D}=4096$ different regression models are fitted to the dataset, thus resulting in $12288$ different seemingly unrelated linear regression models. The total number $P$ of regressors included in a model is between 0 and 12.

The EM algorithm has failed due to the singularity of some matrices for two models when $K=2$ and 40 models when $K=3$. The choice of the best model among the estimated ones is performed using the Bayesian Information
Criterion \citep{schwarz1978}:
\begin{equation}\nonumber
BIC_M=2 \max \left[l_M \right]-\mathrm{npar}_M\log(I),
\end{equation}
where $\max \left[l_M \right]$ is the
maximum of the log-likelihood of a model $M$ for the given sample of
$I$ observations, and $\mathrm{npar}_M$ is the number of unconstrained parameters to be
estimated for that model. This criterion allows to trade-off the
fit and parsimony of a given model: the greater the $BIC$, the
better the model.

\begin{figure}
 \centering
   \includegraphics[width=0.6\textwidth]{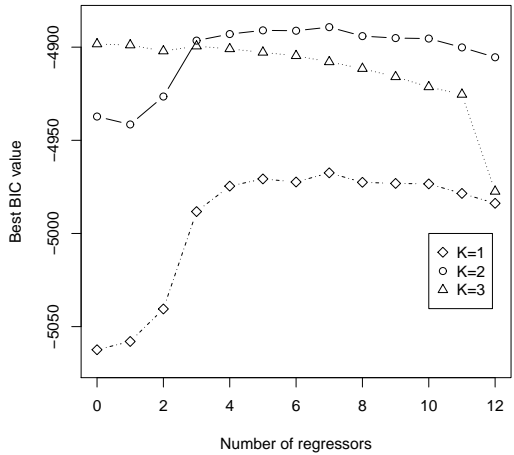}
  \caption{Best BIC values by total number of regressors and number of components.}\label{fig:BIC}
\end{figure}

Figure~\ref{fig:BIC} shows the $BIC$ values of the fitted models with the best trade-off (i.e., the maximum value of the $BIC$) among all the models having the same values of $K$ and $P$, for $K=1, 2, 3$ and $P=0, \ldots, 12$.
By comparing models having the same value of $P$ it emerges that the best performance is obtained using models with three components when the total number of regressors is low ($P=0, 1, 2$); otherwise, models with two components should be preferred. Thus, the introduction of a finite mixture for the distribution of the error terms allows to obtain a relevant improvement with respect to the classical seemingly unrelated regression model with Gaussian errors, for all $P$.

If models are compared by controlling the number of components, $P=7$ regressors should be used when $K=1,2$. Namely, for both values of $K$, the selected regressors for the equations of the variables BMI, PBF and LBM are RCC and PFC; only RCC is selected as a relevant regressor for the equation of SSF. When $K=3$, the best trade-off is obtained using a model without regressors. Some results concerning these three latter models are illustrated in Table~\ref{t:ais1}. Overall, according to the $BIC$ the best model is the one with $K =
2$. In this model, the estimates of the parameters $\pi_1$ and $\pi_2$ are $0.619$ and $0.381$. Tables~\ref{t:ais2} and \ref{t:ais3} report the estimates of the remaining parameters. Compared to the second component, the first component is characterized by lower values of the intercepts for all dependent variables and lower variances for BMI, SSF and PBF. Further differences between components concern some correlations (see the lower triangular parts of $\hat{\boldsymbol{\Sigma}}_1$ and $\hat{\boldsymbol{\Sigma}}_2$ in Table~\ref{t:ais2}). The estimated standard errors of the ML estimators of the regression coefficients in Table~\ref{t:ais3} are computed as the square root of the diagonal elements of $H(\hat{\boldsymbol{\theta}})^{-1}$ that refer to $\boldsymbol{\beta}$. The asymptotic confidence intervals for the regression coefficients in Table~\ref{t:ais3} are obtained by assuming an asymptotic normal distribution for the ML estimators. None of such intervals contains the 0 value.

\begin{table*}
\caption{Maximized log-likelihood and $BIC$ value for the best models with $K$ components ($K=1,2,3$) fitted to the AIS dataset.}
\label{t:ais1}
\centering
\begin{tabular}{lcccc}
\hline\noalign{\smallskip}
$K$ & $P$ & $l_M(\hat{\boldsymbol{\theta}})$ & $\mathrm{npar}_M$ & $BIC_M$ \\ \noalign{\smallskip}\hline\noalign{\smallskip}
$1$ & 7   & $-$2427.993    &     21        & $-$4967.46\\
$2$ & 7   & $-$2349.083    &     36        & $-$4889.26\\
$3$ & 0   & $-$2332.382    &     44        & $-$4898.33\\
\noalign{\smallskip}\hline
\end{tabular}
\end{table*}

\begin{table*}
\caption{Estimates of parameters $\boldsymbol{\lambda}_k$ and $\boldsymbol{\Sigma}_k$ obtained from the best model fitted to the AIS dataset. Estimated correlation coefficients between dependent variables (in italics) are reported in the lower triangular parts of $\hat{\boldsymbol{\Sigma}}_1$ and $\hat{\boldsymbol{\Sigma}}_2$.}
\label{t:ais2}
\centering
\begin{tabular}{crrrr}
\hline\noalign{\smallskip}
 & BMI & SSF & PBF & LBM \\ \noalign{\smallskip}\hline\noalign{\smallskip}
$\hat{\boldsymbol{\lambda}}_1'$ & 10.04 & 86.57  & 23.19 & $-$7.02 \\
$\hat{\boldsymbol{\lambda}}_2'$ & 12.99 & 136.43 & 32.52 & $-$4.88 \\ \noalign{\smallskip}\hline\noalign{\smallskip}
$\hat{\boldsymbol{\Sigma}}_1$ & 3.96 & 5.14 & $-$0.09 & 18.99 \\
& \textit{0.198} & 169.94 & 31.21 & 2.63 \\
& \textit{$-$0.017} & \textit{0.899} & 7.10 & $-$8.73 \\
& \textit{0.810} & \textit{0.017} & \textit{$-$0.278} & 138.82 \\\noalign{\smallskip}\hline\noalign{\smallskip}
$\hat{\boldsymbol{\Sigma}}_2$ & 6.85 & 17.43 & 0.89 & 14.59 \\
& \textit{0.244} & 744.38 & 107.03 & $-$54.50 \\
& \textit{0.080} & \textit{0.928} & 17.88 & $-$15.05 \\
& \textit{0.681} & \textit{$-$0.244} & \textit{$-$0.435} & 67.07 \\
\noalign{\smallskip}\hline
\end{tabular}
\end{table*}

\begin{table*}
\caption{Estimates of the regression coefficients (r.c.) calculated from the best model fitted to the AIS dataset and their estimated standard errors (s.e.). The asymptotic confidence intervals (c.i.) are computed at the 95\% level of confidence.}
\label{t:ais3}
\centering
\begin{tabular}{lccc}
\hline\noalign{\smallskip}
Dependent variable  &    &  Regressors     & \\
                    &    &  RCC            & PFC \\
\hline\noalign{\smallskip}
BMI                 & r.c. & 2.286          & 0.013\\
                    & s.e. & 0.339          & 0.003\\
                    & c.i. & (1.621, 2.950) & (0.007, 0.019)\\ \hline
SSF                 & r.c. & $-$7.746   & - \\
                    & s.e. & 2.783      & - \\
                    & c.i. & ($-$13.200, $-$2.292) & - \\  \hline
PBF                 & r.c. & $-$2.724            & $-$0.005\\
                    & s.e. & 0.565               & 0.002\\
                    & c.i. & ($-$3.832, $-$1.616) & ($-$0.009, $-$0.001)\\ \hline
LBM                 & r.c. & 14.211           & 0.052\\
                    & s.e. & 1.649            & 0.015\\
                    & c.i. & (10.979, 17.442) & (0.023, 0.082)\\
\noalign{\smallskip}\hline
\end{tabular}
\end{table*}

The best model can be used to assign each athlete to the component of the mixture that registered the highest posterior probability, thus producing a partition of the sample into two clusters. Most of the athletes assigned to the second cluster are female (79.2\%), while 68.8\% of the athletes classified in the first cluster are male (Tab.~\ref{t:ais4}). This classification of the athletes is statistically associated with athletes' gender ($\chi^2=43.96$, $p-value=3.36\cdot10^{-11}$). Thus, the omitted regressor captured by the selected model has an effect which is strongly connected with athletes' gender.

\begin{table*}
\caption{Joint classification of the athletes according to gender and cluster membership estimated by the best model.}
\label{t:ais4}
\centering
\begin{tabular}{lrrr}\hline
          & Gender &      &  \\ \hline
Cluster   & Female & Male &\\ \hline
1       & 39 & 86 & 125\\
2       & 61 & 16 & 77\\
        & 100 & 102 & 202\\ \hline
\end{tabular}
\end{table*}

\subsection{A numerical study of some properties of the ML estimates of $\boldsymbol{\beta}$}\label{sec:valutaznumerica}

Properties of the ML estimates of the regression coefficients are evaluated using the parametric bootstrapping residual method \citep{efron1993}. Namely, 5000 bootstrap samples of $I=202$ observations each are generated from the best seemingly unrelated linear regression model described in Section~\ref{sec:ricercaregressori} with parameters equal to the estimates provided in Tables~\ref{t:ais2} and \ref{t:ais3}. For each sample, the ML estimates of the model parameters are computed. For two bootstrap samples this computation is not performed due to the singularity of some matrices. Table~\ref{t:ais4} provides the means and standard deviations of the 4998 ML estimates of the regression coefficients as well as the bootstrap 95\% confidence intervals obtained using the percentile method.

\begin{table*}
\caption{Means and standard deviations (s.d.) of bootstrap replicates of the regression coefficients for the best model fitted to the AIS dataset. Bootstrap confidence intervals are computed at the 95\% level of confidence.}
\label{t:ais4}
\centering
\begin{tabular}{lccc}
\hline\noalign{\smallskip}
Dependent variable &    &  Regressors     & \\
                   &    & RCC            & PFC \\
\hline\noalign{\smallskip}
BMI                & means & 2.287          & 0.013\\
                   & s.d.  & 0.327          & 0.003\\
                   & c.i.  & (1.656, 2.932) & (0.007, 0.019)\\ \hline
SSF                & means & $-$7.746   & - \\
                   & s.d.  & 2.661      & -\\
                   & c.i.  & ($-$13.066, $-$2.529) & -\\  \hline
PBF                & means & $-$2.724            & $-$0.005\\
                   & s.d. & 0.533               & 0.002\\
                   & c.i. & ($-$3.795, $-$1.697) & ($-$0.009, $-$0.001)\\ \hline
LBM                & means & 14.198           & 0.052\\
                   & s.d. & 1.526            & 0.014\\
                   & c.i. & (11.222, 17.091) & (0.024, 0.080)\\
\noalign{\smallskip}\hline
\end{tabular}
\end{table*}

The comparison between the results in Tables~\ref{t:ais3} and \ref{t:ais4} allows to obtain a numerical evaluation of the properties of the ML estimator for the parameters of the selected model. From the differences between the estimated regression coefficients and the means of the bootstrap replicates it emerges that the bias of the ML estimator is negligible for all regression coefficients. Namely, all the ratios between the absolute value of each bias and the bootstrap estimate for the corresponding standard error are lower than 0.04. As far as the estimates of the standard errors are concerned, the relative differences between the asymptotic and bootstrap estimates range from $-3.4 \%$ (regression coefficient of PFC on PBF: 0.001863 against 0.001929) to $8.1\%$ (regression coefficient of RCC on LBM). These differences in the estimates of the standard errors reflect upon the differences in the confidence intervals: the asymptotic confidence intervals are narrower than the bootstrap intervals when the asymptotic standard errors are smaller than the corresponding bootstrap ones. It is worth noting that the ML estimates are almost in the centre of the corresponding bootstrap confidence intervals. These results are related to the shape of the p.d.f. of the ML estimators. Figures~\ref{fig:rcc} and \ref{fig:fe} show the estimates of these p.d.f.'s obtained by applying the kernel method to the bootstrap replicates (the bandwidths were selected according to \citet{sheather1991}); ML estimates are depicted using vertical dotted lines. The distributions result to be approximately symmetric about the ML estimates.

\begin{figure}
 \centering
   \includegraphics[width=0.8\textwidth]{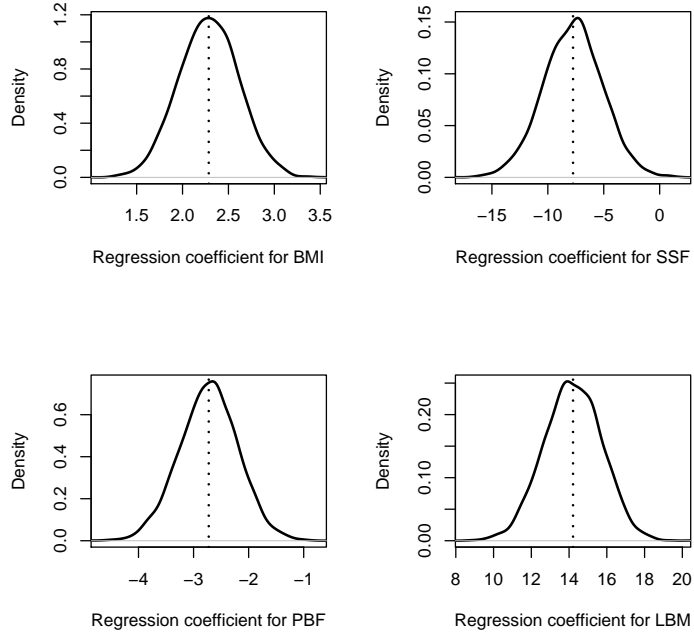}
  \caption{Estimated p.d.f. of the ML estimators of the regression coefficients of RCC on BMI, SSF, PBF and LBM, based on the bootstrap samples.}\label{fig:rcc}
\end{figure}

\begin{figure}
 \centering
   \includegraphics[width=1\textwidth]{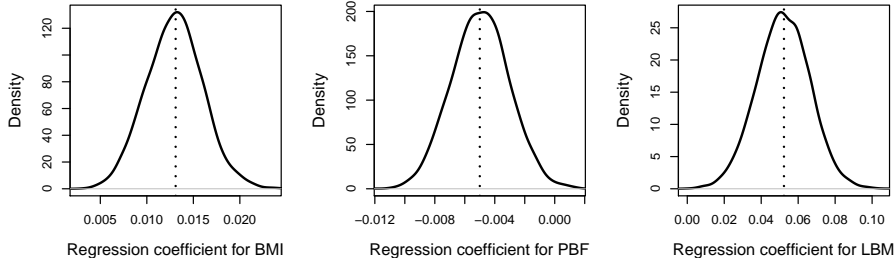}
  \caption{Estimated p.d.f. of the ML estimators of the regression coefficients of PFC on BMI, PBF and LBM, based on the bootstrap samples.}\label{fig:fe}
\end{figure}

\section{Concluding remarks}\label{sec:conclusion}

In this paper, multivariate Gaussian mixtures are used to model the error terms in seemingly unrelated linear regressions. This allows to exploit the flexibility of mixtures for dealing with non Gaussian errors. In particular,
the resulting models are able to handle asymmetric and heavy-tailed errors and to detect and capture the effect of relevant nominal regressors omitted from the model. Furthermore, by setting the number of components equal to one or by constraining all the equations to have the same regressors, some solutions already described in the statistical literature can be obtained as special cases.

Parsimonious seemingly unrelated linear regression models can be obtained by introducing some constraints on the component covariance matrices $\boldsymbol{\Sigma}_{k}$'s, based on the spectral decomposition \citep[see, e.g.,][]{banfield1993,celeux1995,mclachlan2003,mcnicholas2008}. Such models could provide a good fit for some datasets by using a lower number of parameters; they could be useful especially in the presence of a large number of dependent variables.

In Section~\ref{sec:experimental} the $BIC$ is used to select the relevant regressors in each equation as well as the number of mixture components. The use of this criterion can be motivated on the basis of both theoretical and practical results \citep[see, e.g.,][]{cutler1994,keribin2000,ray2008,maugis2009a,maugis2009b}. Clearly, other model selection criteria could be used, such as the $ICL$ (\citet{biernacki2000}, which additionally takes into account the uncertainty of the classification of the sample units to the mixture components.


Some computational issues could arise when using the models proposed in this paper. For example, when the number of candidate regressors is large, an exhaustive search for the relevant regressors for each equation could be unfeasible. A possible solution could be obtained by resorting to stochastic search techniques, such as genetic algorithms \citep[see, e.g.,][]{chatterjee1996}. As far as the EM algorithm is concerned, different initialisation strategies may be considered and evaluated \citep[see, e.g.,][]{biernacki2003,melnykov2012}. Although these issues were not the main focus of this paper, they could deserve further investigation.

\appendix
\section{Proof of Theorem~\ref{theo:score}}\label{proof:score}

The proof is based on the computation of the first order differential of $l\left(\boldsymbol{\theta}\right)$.
The model log-likelihood in equation (\ref{eq:loglik}) can be expressed as $l\left(\boldsymbol{\theta}\right)=\sum_{i=1}^I \ln \left(\sum_{k=1}^K f_{ki}\right)$. Thus, the first differential of $l\left(\boldsymbol{\theta}\right)$ is
\begin{equation}\label{diff}
\mathrm{d}l\left(\boldsymbol{\theta}\right) = \sum_{i=1}^I \mathrm{d}\ln \left(\sum_{k=1}^K f_{ki}\right) = \sum_{i=1}^I \left(\sum_{k=1}^K \alpha_{ki}\mathrm{d}\ln f_{ki}\right).
\end{equation}
Up to an additive constant, $\ln f_{ki}$ is equal to
\begin{equation}\nonumber
\ln\pi_k -\frac{1}{2}\ln \det\left( \boldsymbol{\Sigma}_k\right) -\frac{1}{2}\mathrm{tr}\left[\boldsymbol{\Sigma}_k^{-1}\left(\mathbf{y}_i-\boldsymbol{\lambda}_k-\mathbf{X}'_{i}\boldsymbol{\beta} \right)\left(\mathbf{y}_i-\boldsymbol{\lambda}_k-\mathbf{X}_{i}'\boldsymbol{\beta} \right)'\right],
\end{equation}
and
\begin{equation}\label{eq:dlnfki}
\mathrm{d}\ln f_{ki} = \mathrm{d} \ln \pi_k+\mathrm{d}_{ki1}+\mathrm{d}_{ki2}+\mathrm{d}_{ki3},
\end{equation}
where
\begin{eqnarray}
\mathrm{d}_{ki1} & = & -\frac{1}{2}\mathrm{d}\left(\ln \det\left( \boldsymbol{\Sigma}_k\right)\right),\\
\mathrm{d}_{ki2} & = & -\frac{1}{2} \mathrm{tr}\left[\mathrm{d}\left(\boldsymbol{\Sigma}_k^{-1}\right)\left(\mathbf{y}_i-\boldsymbol{\lambda}_k-\mathbf{X}'_{i}\boldsymbol{\beta} \right)\left(\mathbf{y}_i-\boldsymbol{\lambda}_k-\mathbf{X}'_{i}\boldsymbol{\beta} \right)'\right],\\
\mathrm{d}_{ki3} & = & -\frac{1}{2} \mathrm{tr}\left[\boldsymbol{\Sigma}_k^{-1}\mathrm{d}\left(\left(\mathbf{y}_i-\boldsymbol{\lambda}_k-\mathbf{X}'_{i}\boldsymbol{\beta} \right)\left(\mathbf{y}_i-\boldsymbol{\lambda}_k-\mathbf{X}'_{i}\boldsymbol{\beta} \right)'\right)\right].
\end{eqnarray}
The four terms in equation (\ref{eq:dlnfki}) can be re-expressed as follows:
\begin{eqnarray}\label{eq:dpik}
\mathrm{d}\ln\pi_k & = & \left(\mathrm{d}\boldsymbol{\pi}\right)'\mathbf{a}_k,\\ \label{eq:dk1}
\mathrm{d}_{ki1} & = & -\frac{1}{2}\mathrm{tr}\left[\left(\mathrm{d}\boldsymbol{\Sigma}_k\right)\boldsymbol{\Sigma}_k^{-1}\right],\\\label{eq:dk2}
\mathrm{d}_{ki2} & = & \frac{1}{2} \mathrm{tr}\left[\left(\mathrm{d}\boldsymbol{\Sigma}_k\right)\mathbf{b}_{ki}\mathbf{b}'_{ki}
\right],\\\label{eq:dk3}
\mathrm{d}_{ki3} & = & \left(\mathrm{d}\boldsymbol{\lambda}_k\right)'\mathbf{b}_{ki}
+ \left(\mathrm{d}\boldsymbol{\beta}\right)'\mathbf{X}_{i}\mathbf{b}_{ki},
\end{eqnarray}
where equations~(\ref{eq:dk1})-(\ref{eq:dk3}) are obtained by exploiting some results from matrix derivatives (\citealt[][pgs. 182-183]{magnus1988}; \citealt[][pgs. 292, 293, 361]{schott2005}).
Since the sum of $\mathrm{d}_{ki1}$ and $\mathrm{d}_{ki2}$ results in
\begin{equation}\label{eq:dk12}
\mathrm{d}_{ki1}+\mathrm{d}_{ki2}  =  -\frac{1}{2}\mathrm{d}\left(\mathrm{v}\boldsymbol{\Sigma}_k\right)'\mathbf{G}'\mathrm{vec}\left(\mathbf{B}_{ki}\right),
\end{equation}
\citep[see][pgs. 293, 313, 356, 374]{schott2005}, inserting equations (\ref{eq:dpik}), (\ref{eq:dk3}) and (\ref{eq:dk12}) in equation (\ref{eq:dlnfki}) leads to
\begin{eqnarray}\nonumber
\mathrm{d}\ln f_{ki} & = & \left(\mathrm{d}\boldsymbol{\pi}\right)'\mathbf{a}_k+
\left(\mathrm{d}\boldsymbol{\beta}\right)'\mathbf{X}_{i}\mathbf{b}_{ki}
+ \left(\mathrm{d}\boldsymbol{\lambda}_k\right)'\mathbf{b}_{ki}
-\frac{1}{2}\mathrm{d}\left(\mathrm{v}\boldsymbol{\Sigma}_k\right)'\mathbf{G}'\mathrm{vec}\left(\mathbf{B}_{ki}\right)\\ \label{eq:dlnfki2}
& = & \left(\mathrm{d}\boldsymbol{\pi}\right)'\mathbf{a}_k+
\left(\mathrm{d}\boldsymbol{\beta}\right)'\mathbf{X}_{i}\mathbf{b}_{ki}
+ \left(\mathrm{d}\boldsymbol{\theta}_k\right)'\mathbf{c}_{ki}.
\end{eqnarray}
Using equations (\ref{diff}) and (\ref{eq:dlnfki2}), $\mathrm{d}l\left(\boldsymbol{\theta}\right)$  can be expressed as \begin{equation}
\mathrm{d}l\left(\boldsymbol{\theta}\right)  = \left(\mathrm{d}\boldsymbol{\pi}\right)'\sum_{i=1}^I \sum_{k=1}^K \alpha_{ki}\mathbf{a}_k
  + \left(\mathrm{d}\boldsymbol{\beta}\right)'\sum_{i=1}^I \mathbf{X}_{i}\sum_{k=1}^K \alpha_{ki}\mathbf{b}_{ki}
  +\sum_{k=1}^K \left(\mathrm{d}\boldsymbol{\theta}_k\right)' \sum_{i=1}^I \alpha_{ki}\mathbf{c}_{ki},
\end{equation}
thus proving the theorem.

\section{Proof of Theorem~\ref{theo:hessian}}\label{proof:hessian}
The proof is based on the computation of the second order differential of $l\left(\boldsymbol{\theta}\right)$:
\begin{equation}\label{eq:d1}
\mathrm{d}^2l\left(\boldsymbol{\theta}\right)  =  \sum_{i=1}^I \mathrm{d}^2\ln \left(\sum_{k=1}^K f_{ki}\right),
\end{equation}
where
\begin{equation}\label{eq:d2}
\mathrm{d}^2\ln \left(\sum_{k=1}^K f_{ki}\right) = \sum_{k=1}^K \alpha_{ki}\mathrm{d}^2\ln f_{ki}+\sum_{k=1}^K \alpha_{ki}\left(\mathrm{d}\ln f_{ki}\right)^2  -\left(\sum_{k=1}^K \alpha_{ki}\mathrm{d}\ln f_{ki}\right)^2
\end{equation}
\citep[see][Appendix]{boldea2009}.

Since $\left(\mathrm{d}\ln f_{ki}\right)^2=\left(\mathrm{d}\ln f_{ki}\right)\left(\mathrm{d}\ln f_{ki}\right)'$, using equation (\ref{eq:dlnfki2}) it results that
\begin{eqnarray}\nonumber
\left(\mathrm{d}\ln f_{ki}\right)^2  & = &
\left(\mathrm{d}\boldsymbol{\pi}\right)'\mathbf{a}_k\mathbf{a}'_k\mathrm{d}\boldsymbol{\pi} +  \left(\mathrm{d}\boldsymbol{\pi}\right)'\mathbf{a}_k \mathbf{b}'_{ki}\mathbf{X}'_{i}\mathrm{d}\boldsymbol{\beta} +  \left(\mathrm{d}\boldsymbol{\pi}\right)'\mathbf{a}_k \mathbf{c}'_{ki}\mathrm{d}\boldsymbol{\theta}_k  \\ \nonumber
 & & + \left(\mathrm{d}\boldsymbol{\beta}\right)'\mathbf{X}_{i}\mathbf{b}_{ki}\mathbf{a}'_k\mathrm{d}\boldsymbol{\pi} + \left(\mathrm{d}\boldsymbol{\beta}\right)'\mathbf{X}_{i}\mathbf{b}_{ki}\mathbf{b}'_{ki}\mathbf{X}'_{i}\mathrm{d}\boldsymbol{\beta}+ \left(\mathrm{d}\boldsymbol{\beta}\right)'\mathbf{X}_{i}\mathbf{b}_{ki}\mathbf{c}'_{ki}\mathrm{d}\boldsymbol{\theta}_k \\ \label{eq:perd2b}
 & & + \left(\mathrm{d}\boldsymbol{\theta}_k\right)'\mathbf{c}_{ki}\mathbf{a}'_k\mathrm{d}\boldsymbol{\pi} +  \left(\mathrm{d}\boldsymbol{\theta}_k\right)'\mathbf{c}_{ki}\mathbf{b}'_{ki}\mathbf{X}'_{i}\mathrm{d}\boldsymbol{\beta}  + \left(\mathrm{d}\boldsymbol{\theta}_k\right)'\mathbf{c}_{ki}\mathbf{c}'_{ki}\mathrm{d}\boldsymbol{\theta}_k.
\end{eqnarray}
Similarly,
\begin{eqnarray}\nonumber
\left(\sum_{k=1}^K \alpha_{ki} \mathrm{d} \ln f_{ki}\right)^2 & = & \left( \sum_{k=1}^K \alpha_{ki}\mathrm{d} \ln f_{ki}\right) \left(\sum_{k=1}^K\alpha_{ki}\mathrm{d}\ln f_{ki}\right)'\\ \nonumber
 & = & \left(\mathrm{d}\boldsymbol{\pi}\right)'\bar{\mathbf{a}}_i\bar{\mathbf{a}}'_i\mathrm{d}\boldsymbol{\pi} +
 \left(\mathrm{d}\boldsymbol{\pi}\right)'\bar{\mathbf{a}}_i \bar{\mathbf{b}}'_{i}\mathbf{X}'_{i}\mathrm{d}\boldsymbol{\beta} +  \left(\mathrm{d}\boldsymbol{\pi}\right)'\bar{\mathbf{a}}_i \sum_{k=1}^K \alpha_{ki}\mathbf{c}'_{ki}\mathrm{d}\boldsymbol{\theta}_k \\ \nonumber
   & & + \left(\mathrm{d}\boldsymbol{\beta}\right)'\mathbf{X}_{i}\bar{\mathbf{b}}_{i}\bar{\mathbf{a}}'_i\mathrm{d}\boldsymbol{\pi}+ \left(\mathrm{d}\boldsymbol{\beta}\right)'\mathbf{X}_{i}\bar{\mathbf{b}}_{i}\bar{\mathbf{b}}'_{i}\mathbf{X}'_{i}\mathrm{d}\boldsymbol{\beta}  \\ \nonumber
   & &    + \left(\mathrm{d}\boldsymbol{\beta}\right)'\mathbf{X}_{i}\bar{\mathbf{b}}_{i}\sum_{k=1}^K\alpha_{ki}\mathbf{c}'_{ki}\mathrm{d}\boldsymbol{\theta}_k+ \left[\sum_{k=1}^K\left(\mathrm{d}\boldsymbol{\theta}_k\right)'\alpha_{ki}\mathbf{c}_{ki}\right]\bar{\mathbf{a}}'_i\mathrm{d}\boldsymbol{\pi} \\ \nonumber
    & &    +  \left[\sum_{k=1}^K\left(\mathrm{d}\boldsymbol{\theta}_k\right)'\alpha_{ki}\mathbf{c}_{ki}\right]\bar{\mathbf{b}}'_{i}
  \mathbf{X}_{i}'\mathrm{d}\boldsymbol{\beta}  \\ \label{eq:perd2a}
& &   + \sum_{k=1}^K\sum_{h=1}^K\left(\mathrm{d}\boldsymbol{\theta}_k\right)'\alpha_{ki}\alpha_{hi}\mathbf{c}_{ki}\mathbf{c}'_{hi}
  \mathrm{d}\boldsymbol{\theta}_l.
 \end{eqnarray}
Furthermore,
\begin{eqnarray}\nonumber
\mathrm{d}^2\ln f_{ki} & = & -\left(\mathrm{d}\boldsymbol{\pi}\right)'\mathbf{a}_k\mathbf{a}'_k\mathrm{d}\boldsymbol{\pi}
-\left(\mathrm{d}\boldsymbol{\beta}\right)'\mathbf{X}_{i}\boldsymbol{\Sigma}_k^{-1}\mathbf{X}'_{i}\mathrm{d}\boldsymbol{\beta}
-\left(\mathrm{d}\boldsymbol{\theta}_k\right)'\mathbf{F}'_{ki}\mathbf{X}'_{i}\mathrm{d}\boldsymbol{\beta}\\ \label{eq:d2lnfki}
& & -\left(\mathrm{d}\boldsymbol{\beta}\right)'\mathbf{X}_{i}\mathbf{F}_{ki}\mathrm{d}\boldsymbol{\theta}_k  -\left(\mathrm{d}\boldsymbol{\theta}_k\right)'\mathbf{C}_{ki}\mathrm{d}\boldsymbol{\theta}_k
\end{eqnarray}
(see \ref{sec:d2lnfki}).
From equations (\ref{eq:d2}), (\ref{eq:perd2b}), (\ref{eq:perd2a}) and (\ref{eq:d2lnfki}) and by grouping together the common factors it follows that
\begin{eqnarray}\nonumber
\mathrm{d}^2\ln \left(\sum_{k=1}^K f_{ki}\right) & = & -\left(\mathrm{d}\boldsymbol{\pi}\right)'\bar{\mathbf{a}}_i\bar{\mathbf{a}}'_i\mathrm{d}\boldsymbol{\pi}  +  \left(\mathrm{d}\boldsymbol{\pi}\right)'\left[\left(\sum_{k=1}^K\alpha_{ki}\mathbf{a}_k \mathbf{b}'_{ki}\right)-\bar{\mathbf{a}}_i \bar{\mathbf{b}}'_{i}\right]\mathbf{X}'_{i}\mathrm{d}\boldsymbol{\beta} \\ \nonumber
& & + \left(\mathrm{d}\boldsymbol{\pi}\right)'\left[\sum_{k=1}^K\alpha_{ki}\left(\mathbf{a}_{k}-\bar{\mathbf{a}}_i\right) \mathbf{c}'_{ki}\mathrm{d}\boldsymbol{\theta}_k\right]\\ \nonumber
& & + \left(\mathrm{d}\boldsymbol{\beta}\right)'\mathbf{X}_{i}\left[\left(\sum_{k=1}^K\alpha_{ki}\mathbf{b}_{ki} \mathbf{a}'_{k}\right)-\bar{\mathbf{b}}_{i}\bar{\mathbf{a}}'_i\right]\mathrm{d}\boldsymbol{\pi} \\ \nonumber
  & & -  \left(\mathrm{d}\boldsymbol{\beta}\right)'\mathbf{X}_{i}\left[\bar{\mathbf{B}}_{i}+ \bar{\mathbf{b}}_{i}\bar{\mathbf{b}}'_{i}\right]\mathbf{X}'_{i}\mathrm{d}\boldsymbol{\beta} \\ \nonumber
 & & -\left(\mathrm{d}\boldsymbol{\beta}\right)'\mathbf{X}_{i}\left\lbrace\sum_{k=1}^K\alpha_{ki}
  \left[\mathbf{F}_{ki}-\left(\mathbf{b}_{ki}-\bar{\mathbf{b}}_{i}\right)\mathbf{c}'_{ki}\right]\mathrm{d}\boldsymbol{\theta}_k\right\rbrace \\ \nonumber
  & & + \left[\sum_{k=1}^K\left(\mathrm{d}\boldsymbol{\theta}_k\right)'\alpha_{ki}\mathbf{c}_{ki} \left(\mathbf{a}'_k-\bar{\mathbf{a}}'_i\right)\right]\mathrm{d}\boldsymbol{\pi}\\ \nonumber
 & &   - \left\lbrace\sum_{k=1}^K\left(\mathrm{d}\boldsymbol{\theta}_k\right)'\alpha_{ki}\left[\mathbf{F}'_{ki} -\mathbf{c}_{ki}\left(\mathbf{b}'_{ki}-\bar{\mathbf{b}}'_{i}\right)\right]\right\rbrace\mathbf{X}'_{i}\mathrm{d}\boldsymbol{\beta}\\ \nonumber
    & & - \sum_{k=1}^K\left(\mathrm{d}\boldsymbol{\theta}_k\right)'\alpha_{ki}\left[\mathbf{C}_{ki}-\mathbf{c}_{ki}\mathbf{c}_{ki}'\right]\mathrm{d}\boldsymbol{\theta}_k\\ \label{eq:perd2}
& & -\sum_{k=1}^K\sum_{h=1}^K\left[\left(\mathrm{d}\boldsymbol{\theta}_k\right)'\alpha_{ki}\alpha_{hi}\mathbf{c}_{ki}\mathbf{c}_{hi}'\mathrm{d}\boldsymbol{\theta}_h\right].
 \end{eqnarray}
Inserting equation (\ref{eq:perd2}) in equation (\ref{eq:d1}) completes the proof.

\section{Second order differential of $\ln f_{ki}$}\label{sec:d2lnfki}
Using equation (\ref{eq:dlnfki}) the second order differential of $\ln f_{ki}$ can be expressed as
\begin{equation}\label{eq:d2lnfki}
\mathrm{d}^2\ln f_{ki}=\mathrm{d}^2\ln\pi_k+\mathrm{d}\left(\mathrm{d}_{ki1}\right)+\mathrm{d}\left(\mathrm{d}_{ki2}\right)+\mathrm{d}\left(\mathrm{d}_{ki3}\right).
\end{equation}
From equation (\ref{eq:dpik}) it follows that
\begin{equation}\label{eq:d2pik}
\mathrm{d}^2\ln\pi_k=-\left(\mathrm{d}\boldsymbol{\pi}\right)'\mathbf{a}_k\mathbf{a}'_k\mathrm{d}\boldsymbol{\pi}.
\end{equation}
The second term in equation (\ref{eq:d2lnfki}) is equal to
\begin{equation}\label{eq:ddki1}
\mathrm{d}\left(\mathrm{d}_{ki1}\right)=-\frac{1}{2}\mathrm{tr}\left[\mathrm{d}\boldsymbol{\Sigma}_k\left(\mathrm{d}\boldsymbol{\Sigma}_k^{-1}\right)\right]=\frac{1}{2}\mathrm{tr}\left[\left(\mathrm{d}\boldsymbol{\Sigma}_k\right)\boldsymbol{\Sigma}_k^{-1}\left(\mathrm{d}\boldsymbol{\Sigma}_k\right)\boldsymbol{\Sigma}_k^{-1}\right].
\end{equation}
The third term that composes $\mathrm{d}^2\ln f_{ki}$ results to be
\begin{eqnarray}\nonumber
\mathrm{d}\left(\mathrm{d}_{ki2}\right) & = &
\frac{1}{2}\mathrm{tr}\left[\mathrm{d}\left(\boldsymbol{\Sigma}_k^{-1}\right)\left(\mathrm{d}\boldsymbol{\Sigma}_k\right)\boldsymbol{\Sigma}_k^{-1}\left(\mathbf{y}_i-\boldsymbol{\lambda}_k-\mathbf{X}'_{i}\boldsymbol{\beta} \right)\left(\mathbf{y}_i-\boldsymbol{\lambda}_k-\mathbf{X}'_{i}\boldsymbol{\beta} \right)'\right]\\ \nonumber
& & +\frac{1}{2}\mathrm{tr}\left[\boldsymbol{\Sigma}_k^{-1}\left(\mathrm{d}\boldsymbol{\Sigma}_k\right)\mathrm{d}\left(\boldsymbol{\Sigma}_k^{-1}\right)\left(\mathbf{y}_i-\boldsymbol{\lambda}_k-\mathbf{X}'_{i}\boldsymbol{\beta} \right)\left(\mathbf{y}_i-\boldsymbol{\lambda}_k-\mathbf{X}'_{i}\boldsymbol{\beta} \right)'\right]\\ \nonumber
& &
+\frac{1}{2}\mathrm{tr}\left[\boldsymbol{\Sigma}_k^{-1}\left(\mathrm{d}\boldsymbol{\Sigma}_k\right)
\boldsymbol{\Sigma}_k^{-1}\mathrm{d}\left(\left(\mathbf{y}_i-\boldsymbol{\lambda}_k-\mathbf{X}'_{i}\boldsymbol{\beta} \right)\left(\mathbf{y}_i-\boldsymbol{\lambda}_k-\mathbf{X}'_{i}\boldsymbol{\beta} \right)'\right)\right].
\end{eqnarray}
By exploiting some properties of the trace of a square matrix \citep[see, e.g.][]{schott2005}, $\mathrm{d}\left(\mathrm{d}_{ki2}\right)$ can also be expressed as
\begin{eqnarray}\nonumber
\mathrm{d}\left(\mathrm{d}_{ki2}\right) & = & \mathrm{tr}\left[\left(\mathrm{d}\boldsymbol{\Sigma}_k\right)\mathrm{d}\left(\boldsymbol{\Sigma}_k^{-1}\right)\left(\mathbf{y}_i-\boldsymbol{\lambda}_k-\mathbf{X}'_{i}\boldsymbol{\beta} \right)\left(\mathbf{y}_i-\boldsymbol{\lambda}_k-\mathbf{X}'_{i}\boldsymbol{\beta} \right)^{\top}\boldsymbol{\Sigma}_k^{-1}\right]\\ \nonumber
& & +\frac{1}{2}\mathrm{tr}\left[\boldsymbol{\Sigma}_k^{-1}\left(\mathrm{d}\boldsymbol{\Sigma}_k\right)
\boldsymbol{\Sigma}_k^{-1}\mathrm{d}\left(\left(\mathbf{y}_i-\boldsymbol{\lambda}_k-\mathbf{X}'_{i}\boldsymbol{\beta} \right)\left(\mathbf{y}_i-\boldsymbol{\lambda}_k-\mathbf{X}'_{i}\boldsymbol{\beta} \right)'\right)\right],
\end{eqnarray}
and using two theorems about the vec and trace operators \citep[][Theorems 8.9 and 8.12]{schott2005} it follows that
\begin{eqnarray}\nonumber
\mathrm{d}\left(\mathrm{d}_{ki2}\right) & = &
\mathrm{tr}\left[\left(\mathrm{d}\boldsymbol{\Sigma}_k\right)\mathrm{d}\left(\boldsymbol{\Sigma}_k^{-1}\right)\left(\mathbf{y}_i-\boldsymbol{\lambda}_k-\mathbf{X}'_{i}\boldsymbol{\beta} \right)\left(\mathbf{y}_i-\boldsymbol{\lambda}_k-\mathbf{X}'_{i}\boldsymbol{\beta} \right)'\boldsymbol{\Sigma}_k^{-1}\right]\\ \nonumber
& &
 -\left(\mathrm{d}\boldsymbol{\lambda}_k\right)'\left(\mathbf{b}'_{ki}\otimes \boldsymbol{\Sigma}_k^{-1}\right)\mathrm{d}\left(\mathrm{vec}\boldsymbol{\Sigma}_k\right)\\ \label{eq:ddki2c}
& & -\left(\mathrm{d}\boldsymbol{\beta}\right)'\mathbf{X}_{i}\left(\mathbf{b}'_{ki}\otimes \boldsymbol{\Sigma}_k^{-1}\right)\mathrm{d}\left(\mathrm{vec}\boldsymbol{\Sigma}_k\right).
\end{eqnarray}
From equations (\ref{eq:ddki1}) and (\ref{eq:ddki2c}) it follows that
\begin{eqnarray}\nonumber
\mathrm{d}\left(\mathrm{d}_{ki1}\right)+\mathrm{d}\left(\mathrm{d}_{ki2}\right) & = & \frac{1}{2}\mathrm{tr}\left[\left(\mathrm{d}\boldsymbol{\Sigma}_k\right)\boldsymbol{\Sigma}_k^{-1}\left(\mathrm{d}\boldsymbol{\Sigma}_k\right)\boldsymbol{\Sigma}_k^{-1}\right]\\ \nonumber
& & -\mathrm{tr}\left[\left(\mathrm{d}\boldsymbol{\Sigma}_k\right)\boldsymbol{\Sigma}_k^{-1}\left(\mathrm{d}\boldsymbol{\Sigma}_k\right)
\mathbf{b}_{ki}\mathbf{b}'_{ki}\right]\\ \nonumber
& & -\left(\mathrm{d}\boldsymbol{\lambda}_k\right)'\left(\mathbf{b}'_{ki}\otimes \boldsymbol{\Sigma}_k^{-1}\right)\mathrm{d}\left(\mathrm{vec}\boldsymbol{\Sigma}_k\right)\\ \nonumber
& & -\left(\mathrm{d}\boldsymbol{\beta}\right)'\mathbf{X}_{i}\left(\mathbf{b}'_{ki}\otimes \boldsymbol{\Sigma}_k^{-1}\right)\mathrm{d}\left(\mathrm{vec}\boldsymbol{\Sigma}_k\right)\\ \nonumber
& = & \frac{1}{2}\mathrm{tr}\left\lbrace\left(\mathrm{d}\boldsymbol{\Sigma}_k\right)\boldsymbol{\Sigma}_k^{-1}\left(\mathrm{d}\boldsymbol{\Sigma}_k\right)\left[\boldsymbol{\Sigma}_k^{-1}+\boldsymbol{\Sigma}_k^{-1}-\boldsymbol{\Sigma}_k^{-1}-2\mathbf{b}_{ki}\mathbf{b}'_{ki}\right]\right\rbrace\\ \nonumber
& & -\left(\mathrm{d}\boldsymbol{\lambda}_k\right)'\left(\mathbf{b}'_{ki}\otimes \boldsymbol{\Sigma}_k^{-1}\right)\mathrm{d}\left(\mathrm{vec}\boldsymbol{\Sigma}_k\right)\\ \nonumber
& & -\left(\mathrm{d}\boldsymbol{\beta}\right)'\mathbf{X}_{i}\left(\mathbf{b}'_{ki}\otimes \boldsymbol{\Sigma}_k^{-1}\right)\mathrm{d}\left(\mathrm{vec}\boldsymbol{\Sigma}_k\right)\\ \nonumber
& = & -\frac{1}{2}\mathrm{vec}\left(\left(\mathrm{d}\boldsymbol{\Sigma}_k\right)'\right)'\left[\left(\boldsymbol{\Sigma}_k^{-1}-2\mathbf{B}_{ki}\right)'\otimes \boldsymbol{\Sigma}_k^{-1}\right]\mathrm{vec}\left(\mathrm{d}\boldsymbol{\Sigma}_k\right)\\ \nonumber
& & -\left(\mathrm{d}\boldsymbol{\lambda}_k\right)'\left(\mathbf{b}'_{ki}\otimes \boldsymbol{\Sigma}_k^{-1}\right)\mathrm{d}\left(\mathrm{vec}\boldsymbol{\Sigma}_k\right)\\ \nonumber
& & -\left(\mathrm{d}\boldsymbol{\beta}\right)'\mathbf{X}_{i}\left(\mathbf{b}'_{ki}\otimes \boldsymbol{\Sigma}_k^{-1}\right)\mathrm{d}\left(\mathrm{vec}\boldsymbol{\Sigma}_k\right) \\ \nonumber
& = & -\frac{1}{2}\mathrm{d}\left(\mathrm{v}\boldsymbol{\Sigma}_k\right)'\mathbf{G}'\left[\left(\boldsymbol{\Sigma}_k^{-1}-2\mathbf{B}_{ki}\right)\otimes \boldsymbol{\Sigma}_k^{-1}\right]\mathbf{G}\mathrm{d}\left(\mathrm{v}\boldsymbol{\Sigma}_k\right)\\ \nonumber
& & -\left(\mathrm{d}\boldsymbol{\lambda}_k\right)'\left(\mathbf{b}'_{ki}\otimes \boldsymbol{\Sigma}_k^{-1}\right)\mathbf{G}\mathrm{d}\left(\mathrm{v}\boldsymbol{\Sigma}_k\right)\\ \label{eq:ddki12}
& & -\left(\mathrm{d}\boldsymbol{\beta}\right)'\mathbf{X}_{i}\left(\mathbf{b}'_{ki}\otimes \boldsymbol{\Sigma}_k^{-1}\right)\mathbf{G}\mathrm{d}\left(\mathrm{v}\boldsymbol{\Sigma}_k\right),
\end{eqnarray}
where the third and fourth equalities are obtained using some
properties of the vec operator \citep[see][pg. 294]{schott2005}.

From equation (\ref{eq:dk3}) it is possible to write
\begin{eqnarray}\nonumber
\mathrm{d}\left(\mathrm{d}_{ki3}\right) & = & \left(\mathrm{d}\boldsymbol{\lambda}_k\right)'\mathrm{d}\mathbf{b}_{ki}
+ \left(\mathrm{d}\boldsymbol{\beta}\right)'\mathbf{X}_{i}\mathrm{d}\mathbf{b}_{ki}\\ \nonumber
& = & -\left(\mathrm{d}\boldsymbol{\lambda}_k\right)'\boldsymbol{\Sigma}_k^{-1}\mathrm{d}\left(\boldsymbol{\Sigma}_k\right)\mathbf{b}_{ki}-\left(\mathrm{d}\boldsymbol{\lambda}_k\right)'\boldsymbol{\Sigma}_k^{-1}\mathrm{d}\boldsymbol{\lambda}_k-\left(\mathrm{d}\boldsymbol{\lambda}_k\right)'\boldsymbol{\Sigma}_k^{-1}\mathbf{X}'_{i}\mathrm{d}\boldsymbol{\beta}\\ \nonumber
& & -\left(\mathrm{d}\boldsymbol{\beta}\right)'\mathbf{X}_{i}\boldsymbol{\Sigma}_k^{-1}\mathrm{d}\left(\boldsymbol{\Sigma}_k\right)\mathbf{b}_{ki}-\left(\mathrm{d}\boldsymbol{\beta}\right)'\mathbf{X}_{i}\boldsymbol{\Sigma}_k^{-1}\mathrm{d}\boldsymbol{\lambda}_k-\left(\mathrm{d}\boldsymbol{\beta}\right)^{\top}\mathbf{X}_{i}\boldsymbol{\Sigma}_k^{-1}\mathbf{X}'_{i}\mathrm{d}\boldsymbol{\beta}\\
\nonumber
 & = & -\mathrm{d}\left(\mathrm{v}\boldsymbol{\Sigma}_k\right)'\mathbf{G}'\left(\mathbf{b}_{ki}\otimes \boldsymbol{\Sigma}_k^{-1}\right)\mathrm{d}\boldsymbol{\lambda}_k
  -\left(\mathrm{d}\boldsymbol{\lambda}_k\right)'\boldsymbol{\Sigma}_k^{-1}\mathrm{d}\boldsymbol{\lambda}_k\\ \nonumber
& &  -\left(\mathrm{d}\boldsymbol{\lambda}_k\right)'\boldsymbol{\Sigma}_k^{-1}\mathbf{X}'_{i}\mathrm{d}\boldsymbol{\beta}
  -\mathrm{d}\left(\mathrm{v}\boldsymbol{\Sigma}_k\right)'\mathbf{G}'\left(\mathbf{b}_{ki}\otimes \boldsymbol{\Sigma}_k^{-1}\right)\mathbf{X}'_{i}\mathrm{d}\boldsymbol{\beta}\\ \label{eq:ddki3a}
& & -\left(\mathrm{d}\boldsymbol{\beta}\right)'\mathbf{X}_{i}\boldsymbol{\Sigma}_k^{-1}\mathrm{d}\boldsymbol{\lambda}_k-\left(\mathrm{d}\boldsymbol{\beta}\right)'\mathbf{X}_{i}\boldsymbol{\Sigma}_k^{-1}\mathbf{X}'_{i}\mathrm{d}\boldsymbol{\beta},
\end{eqnarray}
where the third equality results from the same theorems about the vec and trace operators employed above and
the second equality is obtained using the following expression for $\mathrm{d}\mathbf{b}_{ki}$:
\begin{eqnarray}\nonumber
\mathrm{d}\mathbf{b}_{ki} & = & \mathrm{d}\left(\boldsymbol{\Sigma}_k^{-1}\right)\left(\mathbf{y}_i-\boldsymbol{\lambda}_k-\mathbf{X}'_{i}\boldsymbol{\beta} \right)+\boldsymbol{\Sigma}_k^{-1}\mathrm{d}\left(\mathbf{y}_i-\boldsymbol{\lambda}_k-\mathbf{X}'_{i}\boldsymbol{\beta} \right)\\ \nonumber
& = & -\boldsymbol{\Sigma}_k^{-1}\mathrm{d}\left(\boldsymbol{\Sigma}_k\right)\mathbf{b}_{ki}-\boldsymbol{\Sigma}_k^{-1}\mathrm{d}\boldsymbol{\lambda}_k-\boldsymbol{\Sigma}_k^{-1}\mathbf{X}'_{i}\mathrm{d}\boldsymbol{\beta}.
\end{eqnarray}

Inserting equations (\ref{eq:d2pik}), (\ref{eq:ddki12}) and (\ref{eq:ddki3a}) in equation (\ref{eq:d2lnfki})
and using the definitions of $\boldsymbol{\theta}_k$, $\mathbf{F}_{ki}$ and $\mathbf{C}_{ki}$ introduced in Section~\ref{sec:score} results in the following expression for $\mathrm{d}^2\ln f_{ki}$:
\begin{eqnarray}\nonumber
\mathrm{d}^2\ln f_{ki} & = & -\left(\mathrm{d}\boldsymbol{\pi}\right)'\mathbf{a}_k\mathbf{a}'_k\mathrm{d}\boldsymbol{\pi}
-\left(\mathrm{d}\boldsymbol{\beta}\right)'\mathbf{X}_{i}\boldsymbol{\Sigma}_k^{-1}\mathbf{X}'_{i}\mathrm{d}\boldsymbol{\beta}
-\left(\mathrm{d}\boldsymbol{\theta}_k\right)'\mathbf{F}'_{ki}\mathbf{X}'_{i}\mathrm{d}\boldsymbol{\beta}\\ \nonumber
&  & -\left(\mathrm{d}\boldsymbol{\beta}\right)'\mathbf{X}_{i}\mathbf{F}_{ki}\mathrm{d}\boldsymbol{\theta}_k  -\left(\mathrm{d}\boldsymbol{\theta}_k\right)'\mathbf{C}_{ki}\mathrm{d}\boldsymbol{\theta}_k.
\end{eqnarray}


\begin{thebibliography}{}

\bibitem[Ando and Zellner(2010)]{ando2010}
Ando, T., Zellner, A.:
Hierarchical Bayesian analysis of the seemingly unrelated
regression and simultaneous equations models using a combination of direct Monte Carlo and
importance sampling techniques.
Bayesian Anal. {\bf 5}, 65--96 (2010)



\bibitem[Banfield and Raftery(1993)]{banfield1993}
Banfield, J.D., Raftery, A.E.:
Model-based Gaussian and non-Gaussian clustering.
Biometrics {\bf 49}, 803--821 (1993)

\bibitem[Bartolucci and Scaccia(2005)]{bartolucci2005}
Bartolucci, F., Scaccia, L.:
The use of mixtures for dealing with non-normal regression errors.
Comput. Stat. Data Anal. {\bf 48}, 821--834 (2005)


\bibitem[Biernacki \it{et al.}(2000)]{biernacki2000}
Biernacki, C., Celeux, G., Govaert, G.:
Assessing a mixture model for clustering with the integrated classification likelihood.
IEEE Trans. Pattern Anal. Mach. Intell. {\bf 22}, 719--725 (2000)

\bibitem[Biernacki \it{et al.}(2003)]{biernacki2003}
Biernacki, C., Celeux, G., Govaert, G.:
Choosing starting values for the EM algorithm for getting the highest likelihood in multivariate Gaussian mixture models.
Comput. Stat. Data Anal. {\bf 41}, 561--575 (2003)


\bibitem[Boldea and Magnus(2009)]{boldea2009}
Boldea, O., Magnus, J.R.:
Maximum likelihood estimation of the multivariate normal mixture model.
J. Am. Stat. Assoc. {\bf 104}, 1539--1549 (2009)

\bibitem[Chatterjee \it{et al.}(1996)]{chatterjee1996}
Chatterjee, S., Laudato, M., Lynch, L.A.:
Genetic algorithms and their statistical
applications: an introduction. Comput. Stat. Data Anal. {\bf 22}, 633--651 (1996)

\bibitem[Celeux and Govaert(1995)]{celeux1995}
Celeux, G., Govaert, G.:
Gaussian parsimonious clustering models.
Pattern Recognit. {\bf 28}, 781--793 (1995)

\bibitem[Cook and Weisberg(1994)]{cook1994}
Cook, R.D., Weisberg, S.:
An Introduction to Regression Graphics.
Wiley, New York (1994)

\bibitem[Cutler and Windham(1994)]{cutler1994}
Cutler, A., Windham, M.P.:
Information-based validity functionals for
mixture analysis. In: Bozdogan, H. (ed.) Proceedings of the First
US/Japan Conference on the Frontiers of Statistical Modeling: An
Informational Approach, pp. 149--170. Kluwer Academic, Dordrecht (1994)

\bibitem[Dempster \it{et al.}(1977)]{dempster1977}
Dempster, A.P., Laird, N.M., Rubin, D.B.:
Maximum likelihood for incomplete data via the EM algorithm.
J. R. Stat. Soc. Ser. B {\bf 39}, 1--22 (1977)

\bibitem[Efron and Tibshirani(1993)]{efron1993}
Efron, B., Tibshirani, R.J.:
An Introduction to the Bootstrap.
Chapman \& Hall, London (1993)





\bibitem[Fraser \it{et al.}(2005)]{fraser2005}
Fraser, D.A.S., Rekkas, M., Wong, A.:
Highly accurate likelihood analysis for the seemingly unrelated regression problem.
J. Econom. {\bf 127}, 17--33 (2005)

\bibitem[Galimberti and Soffritti(2014)]{galimberti2013}
Galimberti, G., Soffritti, G.:
A multivariate linear regression analysis using finite mixtures of $t$ distributions.
Comput. Stat. Data Anal. {\bf 71}, 138--150 (2014)







\bibitem[Keribin(2000)]{keribin2000}
Keribin, C.:
Consistent estimation of the order of mixture models.
Sankhy\={a} Ser. A, {\bf 62}, 49--66 (2000)

\bibitem[Kmenta and Gilbert(1968)]{kmenta1968}
Kmenta, J., Gilbert, R.:
Small sample properties of alternative estimators of seemingly unrelated regressions.
J. Am. Stat. Assoc. {\bf 63}, 1180--1200 (1968)


\bibitem[Kowalski \it{et al.}(1999)]{kowalski1999}
Kowalski, J., Mendoza-Blanco, J.R., Tu, X.M., Gleser, L.J.:
On the difference in inference and prediction between the joint and independent $t$-error models for seemingly unrelated regressions.
Commun. Stat. Theory {\bf 28}, 2119--2140 (1999)

\bibitem[Kurata(1999)]{kurata1999}
Kurata, H.:
On the efficiencies of several generalized least squares estimators in a seemingly unrelated regression model
and a heteroscedastic model.
J. Multivar. Anal. {\bf 70}, 86--94 (1999)




\bibitem[Magnus(1978)]{magnus1978}
Magnus, J.R.:
Maximum likelihood estimation of the GLS model with unknown parameters in the disturbance covariance matrix.
J. Econom. {\bf 7}, 281--312 (1978)

\bibitem[Magnus and Neudecker(1988)]{magnus1988}
Magnus, J.R., Neudecker, H.:
Matrix Differential Calculus with Applications in Statistics and Econometrics.
John Wiley \& Sons, Chichester (1988)

\bibitem[Maugis \it{et al.}(2009a)]{maugis2009a}
Maugis, C., Celeux, G., Martin-Magniette, M.-L.:
Variable selection in model-based clustering: a general variable role modeling.
Comput. Stat. Data Anal. {\bf 53}, 3872--3882 (2009a)

\bibitem[Maugis \it{et al.}(2009b)]{maugis2009b}
Maugis, C., Celeux, G., Martin-Magniette, M.-L.:
Variable selection for clustering with Gaussian mixture models.
Biometrics {\bf 65}, 701–-709 (2009b)



\bibitem[McLachlan and Krishnan(2008)]{mclachlan2008}
McLachlan, G.J., Krishnan, T.:
The EM Algorithm and Extensions. 2nd edn.
Wiley, Chichester (2008)

\bibitem[McLachlan and Peel(2000)]{mclachlan2000}
McLachlan, G.J., Peel, D.:
Finite Mixture Models.
Wiley, Chichester (2000)

\bibitem[McLachlan \it{et al.}(2003)]{mclachlan2003}
McLachlan, G.J., Peel, D., Bean, R.W.:
Modelling high-dimensional data by mixtures of
factor analyzers.
Comput. Stat. Data Anal. {\bf 41}, 379--388 (2003)

\bibitem[McNicholas and Murphy(2008)]{mcnicholas2008}
McNicholas, P.D., Murphy, T.B.:
Parsimonious Gaussian mixture models.
Stat. Comput. {\bf 18}, 285--296 (2008)



\bibitem[Melnykov and Melnykov(2012)]{melnykov2012}
Melnykov, V., Melnykov, I.:
Initializing the EM algorithm in Gaussian mixture models with an unknown number of components.
Comput. Stat. Data Anal. {\bf 56}, 1381--1395 (2012)

\bibitem[Ng(2002)]{ng2002}
Ng, V.M.:
Robust Bayesian inference for seemingly unrelated regressions with elliptical errors.
J. Multivar. Anal. {\bf 83}, 409--414 (2002)

\bibitem[Oberhofer and Kmenta(1974)]{oberhofer1974}
Oberhofer, W., Kmenta, J.:
A general procedure for obtaining maximum likelihood estimates in generalized regression models.
Econometrica {\bf 42}, 579--590 (1974)



\bibitem[Park(1993)]{park1993}
Park, T.:
Equivalence of maximum likelihood estimation and iterative two-stage estimation for seemingly unrelated regression models.
Commun. Stat. Theory {\bf 22}, 2285--2296 (1993)


\bibitem[Percy(1992)]{percy1992}
Percy, D.F.:
Predictions for seemingly unrelated regression.
J. R. Stat. Soc. Ser. B {\bf 54}, 243--252 (1992)

\bibitem[Ray and Lindsay(2008)]{ray2008}
Ray, S., Lindsay, B.G.:
Model selection in high dimensions: a quadratic-risk-based approach.
J. R. Stat. Soc. Ser. B {\bf 70}, 95–-118 (2008)

\bibitem[R Core Team(2013)]{R2012}
R Core Team:
R: A language and environment for statistical computing.
R Foundation for Statistical Computing, Vienna, Austria.
\texttt{URL http://www.R-project.org/}(2013)

\bibitem[Redner and Walker(1984)]{redner1984}
Redner, R.A., Walker, H.F.:
Mixture densities, maximum likelihood and the EM algorithm.
SIAM Rev. {\bf 26}, 195--239 (1984)

\bibitem[Rilstone and Veall(1996)]{rilstone1996}
Rilstone, P., Veall, M.:
Using bootstrapped confidence intervals for improved inferences with seemingly unrelated regression equations.
Econom. Theory {\bf 12}, 569--580 (1996)

\bibitem[Rocke(1989)]{rocke1989}
Rocke, D.:
Bootstrap Bartlett adjustment in seemingly unrelated regression.
J. Am. Stat. Assoc. {\bf 84}, 598--601 (1989)

\bibitem[Schott(2005)]{schott2005}
Schott, J.R.:
Matrix Analysis for Statistics. 2nd edn.
John Wiley \& Sons, New York (2005)

\bibitem[Schwarz(1978)]{schwarz1978}
Schwarz, G.:
Estimating the dimension of a model.
Ann. Stat. {\bf 6}, 461--464 (1978)


\bibitem[Sheather and Jones(1991)]{sheather1991}
Sheather, S.J., Jones, M.C.:
A reliable data-based bandwidth selection method for kernel density estimation.
J. R. Stat. Soc. Ser. B {\bf 53}, 683-–690 (1991)

\bibitem[Soffritti and Galimberti(2011)]{soffritti2011}
Soffritti, G., Galimberti, G.:
Multivariate linear regression with non-normal errors: a solution based on mixture models.
Stat. Comput. {\bf 21}, 523--536 (2011)


\bibitem[Srivastava and Giles(1987)]{srivastava1987}
Srivastava, V.K., Giles, D.E.A.:
Seemingly Unrelated Regression Equations Models.
Marcel Dekker, New York (1987)

\bibitem[Srivastava and Maekawa(1995)]{srivastava1995}
Srivastava, V.K., Maekawa, K.:
Efficiency properties of feasible generalized least
squares estimators in SURE models under non-normal disturbances.
J. Econom. {\bf 66}, 99--121 (1995)




\bibitem[Zellner(1962)]{zellner1962}
Zellner, A.:
An efficient method of estimating seemingly unrelated regression equations and tests for aggregation bias.
J. Am. Stat. Assoc. {\bf 57}, 348--368 (1962)

\bibitem[Zellner(1963)]{zellner1963}
Zellner, A.:
Estimators for seemingly unrelated regression equations: some exact finite sample results.
J. Am. Stat. Assoc. {\bf 58}, 977--992 (1963)

\bibitem[Zellner(1971)]{zellner1971}
Zellner, A.:
An Introduction to Bayesian Inference in Econometrics.
Wiley, New York (1971)


\bibitem[Zellner and Ando(2010a)]{zellner2010a}
Zellner, A., Ando, T.:
A direct Monte Carlo approach for Bayesian analysis of the seemingly unrelated
regression model.
J. Econom. {\bf 159}, 33--45 (2010a)

\bibitem[Zellner and Ando(2010b)]{zellner2010b}
Zellner, A., Ando, T.:
Bayesian and non-Bayesian analysis of the seemingly unrelated
regression model with Student-\textit{t} errors, and its application for
forecasting.
Int. J. Forecast. {\bf 26}, 413--434 (2010b)

\end{thebibliography}
\end{document}